\newcommand\csays[1]{\footnote{{\small\bf CB says:} #1}}
\newcommand\ksays[1]{\footnote{{\small\bf K says:} #1}}
\newcommand\reviewer[1]{} 
\newcommand\reviewerbis[1]{} 
\newcommand\response[1]{} 
\renewcommand\csays[1]{}
\renewcommand\ksays[1]{}
\newcommand{\labelname}[1]{
  \def\@currentlabelname{#1}}%
\newtheorem{infrule}{Rule}{\bfseries}{\itshape}
\newcommand{\infrulenew}[3][]{
$\inferrule[#1]{#2}{#3}$
}
\DeclareMathOperator{\opplus}{\texttt{+}\,}
\DeclareMathOperator{\opminus}{\texttt{-}\,}
\DeclareMathOperator{\opless}{\texttt{<}\,}
\DeclareMathOperator{\opgeq}{\texttt{>=}\,}
\DeclareMathOperator{\leavesfn}{Leaves}
\newcommand{\leaves}[1]{\leavesfn{(#1)}}
\DeclareMathOperator{\add}{add}
\DeclareMathOperator{\merge}{merge}
\DeclareMathOperator{\memintermfn}{Elements}
\newcommand{\Elements}[1]{\memintermfn(#1)}
\DeclareMathOperator{\varsoffn}{Vars}
\newcommand{\varsof}[1]{\varsoffn{#1}}
\DeclareMathOperator{\termsoffn}{Terms}
\newcommand{\setbuilder}[2]{\left\{#1\ \middle|\ #2\right\}}
\newcommand{\inparen}[1]{\left(#1\right)}
\newcommand{\inbrace}[1]{\left\{#1\right\}}
\newif \ifFANCYOPER
\newcommand{\cvc}{\textsc{cvc4}\xspace}
\newcommand{\define}[1]{\textsl{#1}}
\newcommand{\var}[1]{\mathcal{V}( #1 )}
\newcommand{\Mo}{\mathbf{I}}
\newcommand{\ent}[1][]{\models_{#1}}
\newcommand{\ff}{\bot}
\newcommand{\str}[1]{\mathcal{#1}}
\newcommand{\I}{\str{I}}
\newcommand{\unsat}{\ensuremath{\mathsf{unsat}}\xspace}
\newcommand{\sortFont}[1]{\mathsf{#1}}
\newcommand{\cardSort}{\sortFont{Card}}
\newcommand{\elemSort}{\sortFont{Element}}
\newcommand{\setSort}{\sortFont{Set}}
\newcommand{\theory}[1]{\mathfrak{#1}}
\newcommand{\ariththeory}{{\theory{T}_{A}}}
\newcommand{\settheory}{{\theory{T}_{S}}}
\newcommand{\opemptyset}{\text{`}\emptyset\text{'}}
\newcommand{\opemptyset}{\emptyset}
\newcommand{\opunion}{\sqcup}
\newcommand{\opinter}{\sqcap}
\newcommand{\opsetminus}{\setminus}
\newcommand{\opin}{\mathrel{\ooalign{$\sqsubset$\cr{$-$}}}}
\newcommand{\opnotin}{\not\opin}
\newcommand{\opequal}{\approx}
\newcommand{\opsubset}{\sqsubseteq}
\newcommand{\opsingleton}[1]{\left\lfloor#1\right\rfloor} 
\newcommand{\opsingleton}[1]{\left\{#1\right\}} 
\newcommand{\opcar}{\mathsf{card}}
\newcommand{\opcard}[1]{\opcar(#1)}
\newcommand{\card}[1]{\left|#1\right|}
\newcommand{\singleton}[1]{\inbrace{#1}}
\newcommand{\setconst}{\mathcal{S}}
\newcommand{\memconst}{\mathcal{M}}
\newcommand{\arithconst}{\mathcal{A}}
\newcommand{\graphconst}{\mathcal{G}}
\newcommand{\grapharithconst}{\hat{\graphconst}}
\newcommand{\setconstclosed}{\setconst^*}
\newcommand{\memconstclosed}{\memconst^*}
\newcommand{\termsof}[1]{\mathcal{T}}
\newcommand{\termsofsort}[2]{\termsoffn_{#1}(#2)}
\newcommand{\realtermsof}[1]{\termsoffn(#1)}
\newcommand{\closedaddfn}{\triangleleft}
\newcommand{\closedadd}[2]{#1 \triangleleft (#2)}
\newcommand{\interpretation}{\mathfrak{I}}
\newcommand{\rn}[1]{\textsc{#1}}
\newcommand{\ruleInterDownI}{\rn{Inter Down I}}
\newcommand{\ruleInterDownII}{\rn{Inter Down II}}
\newcommand{\ruleInterUpI}{\rn{Inter Up I}}
\newcommand{\ruleInterUpII}{\rn{Inter  Up II}}
\newcommand{\ruleInterUpSplit}{\rn{Inter split}}
\newcommand{\ruleUnionDownI}{\rn{Union Down I}}
\newcommand{\ruleUnionDownII}{\rn{Union Down II}}
\newcommand{\ruleUnionUpI}{\rn{Union Up I}}
\newcommand{\ruleUnionUpII}{\rn{Union  Up II}}
\newcommand{\ruleUnionUpSplit}{\rn{Union split}}
\newcommand{\ruleSetDifferenceDownI}{\rn{Set difference Down 1}}
\newcommand{\ruleSetDifferenceDownII}{\rn{Set difference Down 2}}
\newcommand{\ruleSetDifferenceDownIII}{\rn{Set difference Down 3}}
\newcommand{\ruleSetDifferenceUpI}{\rn{Set difference Up 1}}
\newcommand{\ruleSetDifferenceUpII}{\rn{Set difference Up 2}}
\newcommand{\ruleSetDifferenceUpIII}{\rn{Set difference Up 3}}
\newcommand{\ruleSetDifferenceSplit}{\rn{Set difference split}}
\newcommand{\ruleSingle}{{\rn{Singleton}}}
\newcommand{\ruleSingleMem}{\rn{Single Member}}
\newcommand{\ruleSingleNonMem}{\rn{Single Non-member}}
\newcommand{\ruleSetDiseq}{\rn{Set Disequality}}
\newcommand{\ruleEqUnsat}{\rn{Eq Unsat}}
\newcommand{\ruleSetUnsat}{\rn{Set Unsat}}
\newcommand{\ruleEmptyUnsat}{\rn{Empty Unsat}}
\newcommand{\ruleIntroEqRight}{\rn{Introduce Eq Right}}
\newcommand{\ruleIntroEqLeft}{\rn{Introduce Eq Left}}
\newcommand{\ruleIntroEqUnion}{\rn{Introduce Union}}
\newcommand{\ruleIntroEqInter}{\rn{Introduce Inter}}
\newcommand{\ruleIntroEqSetDiff}{\rn{Introduce Set difference}}
\newcommand{\ruleIntroEqCard}{\rn{Introduce Card}}
\newcommand{\ruleIntroEqSingle}{\rn{Introduce Singleton}}
\newcommand{\ruleIntroEqEmpty}{\rn{Introduce Empty Set}}
\newcommand{\ruleMergeEqI}{\rn{Merge Equality I}}
\newcommand{\ruleMergeEqII}{\rn{Merge Equality II}}
\newcommand{\ruleMergeEqIII}{\rn{Merge Equality III}}
\newcommand{\ruleAriContra}{\rn{Arithmetic contradiction}}
\newcommand{\ruleGuessEmpty}{\rn{Guess Empty Set}}
\newcommand{\ruleMemArrange}{\rn{Members Arrangement}}
\newcommand{\ruleGuessLB}{\rn{Guess Lower Bound}}
\newcommand{\rulePropMinsize}{\rn{Propagate Minsize}}
\newcommand{\ruleIntroSingleton}{\rn{Introduce Singleton}}
\newcommand{\ruleIntroEmtpy}{\rn{Introduce Empty Set}}
\newcommand{\derivtree}{\mathcal{D}}
\newcommand{\derivation}{\mathbf{D}}
\newcommand{\aset}{s}
\newcommand{\bset}{t}
\newcommand{\cset}{u}
\newcommand{\dset}{v}
\newcommand{\asetv}{S}
\newcommand{\bsetv}{T}
\newcommand{\csetv}{U}
\newcommand{\cardaset}{c_s}
\newcommand{\cardbset}{c_t}
\newcommand{\vertices}{V(\graphconst)}
\newcommand{\edges}{E(\graphconst)}
\newcommand{\neigh}[1]{C(#1)}
\newcommand{\allsetterms}{\mathcal{T}}
\newcommand{\allleaves}{\leaves{\graphconst}}
\newcommand{\structure}{\mathfrak{I}}
\newcommand{\structureS}{\mathfrak{S}}
\newcommand{\structureA}{\mathfrak{A}}
\newcommand{\nonemptyleavesfn}{\mathcal{L}}
\newcommand{\nonemptyleaves}[1]{\nonemptyleavesfn(#1)}
\newcommand{\eqls}{\mathcal{E}}
\newcommand{\equivcl}[2]{\left[#2\right]_{#1}}
\newcommand{\triviallyimply}{\Rrightarrow}
\newcommand{\nottriviallyimply}{\not\Rrightarrow}
\newcommand{\refgraphprop}[1]{Proposition \ref{prop:graph}, property \ref{prop:graph:prop#1}}
\newcommand{\termreln}{\succ}
\newtheorem{restr}[thm]{Restriction}
\theoremstyle{remark}
\newtheorem{case}{Case}
\begin{document}

\title
[Reasoning with finite sets and cardinality]
{Reasoning with finite sets and \\
cardinality constraints in SMT}

\author[Bansal et al.]{Kshitij Bansal\rsuper{a}}	
\address{\lsuper{a}Google, Inc.}	
\email{kbk@google.com}  
\thanks{This work was partially supported by NSF grants 1228765,
  1228768, and 1320583. The first author was at New York University
  when this work was completed.}

\author[]{Clark Barrett\rsuper{b}}	
\address{\lsuper{b}Department of Computer Science, Stanford University}
\email{barrett@cs.stanford.edu}

\author[]{Andrew Reynolds\rsuper{c}}	
\address{\lsuper{c}Department of Computer Science, The University of Iowa}	
\email{andrew-reynolds@uiowa.edu}

\author[]{Cesare Tinelli\rsuper{c}}	
\address{\vskip-7pt}
\email{cesare-tinelli@uiowa.edu}



\keywords{Satisfiability modulo theories, Finite sets, Decision procedures}
\subjclass{Theory of computation: Automated reasoning}
\titlecomment{This article extends~\cite{BRBT16} with
additional calculus rules and a full proof of correctness.}


\begin{abstract}
We consider the problem of deciding the satisfiability of quantifier-free formulas
in the theory of finite sets with cardinality constraints.
Sets are a common high-level data structure used in
programming; thus, such a theory is useful for modeling program
constructs directly. More importantly, sets are a basic construct of
mathematics and thus natural to use when formalizing the
properties of computational systems.
We develop a calculus describing a modular combination of a procedure for
reasoning about membership constraints with a procedure for reasoning about
cardinality constraints. Cardinality reasoning involves tracking how
different sets overlap. For efficiency, we avoid considering Venn regions
directly, as done in previous work.  Instead, we develop a
novel technique wherein potentially overlapping regions are considered
incrementally as needed, using a graph to track the interaction 
among the different regions.
The calculus has been designed to facilitate its implementation
within SMT solvers based on the DPLL($T$) architecture.
Our experimental results demonstrate that the new techniques are
competitive with previous techniques and can scale much better 
on certain classes of problems.
\end{abstract}

\maketitle


\section{Introduction}


Satisfiability modulo theories (SMT) solvers are at the heart of many
formal methods tools. One of the reasons for their popularity is that
fast, dedicated decision procedures for fragments of
first-order logic that SMT solvers implement are extremely useful for
reasoning about constructs common in hardware and software
verification.  In particular, they provide a good balance between
speed and expressiveness. Common fragments include theories such as
bitvectors, arithmetic, and arrays, which are useful for modeling
basic constructs as well as for performing general reasoning.

As the use of SMT solvers has spread, there has been a corresponding demand for
SMT solvers to support additional useful theories.  Although it is possible to
encode finitely axiomatizable theories using quantifiers, the performance and
robustness gap between a custom decision procedure and an encoding using
quantifiers can be quite significant.

In this paper, we present a new decision procedure for a fragment of finite set theory. 
Our main motivation is that sets are a
common abstraction used in programming.
As with other general-purpose SMT theories such as the theories of arrays and bitvectors, 
the theory of finite sets is useful for modeling a variety of program constructs.
Sets are also used directly in high-level programming languages such as SETL~\cite{Schwartzetal86} and in
specification languages such as Alloy~\cite{jackson2012software}, B~\cite{abrial2005b} and Z~\cite{AbrialSM80}.
More generally, sets are a basic construct in mathematics
and come up quite naturally when trying to express properties of systems.

While the full language of set theory is undecidable, many interesting
fragments are known to be decidable.  We present a calculus for the theory of finite sets 
which can handle basic set operations, such as membership, union, intersection, and
difference, and which can also reason efficiently about set cardinalities
and linear constraints involving them.  
The calculus is explicitly designed for easy integration into the DPLL($T$) framework~\cite{NieOT-JACM-06}.
We briefly describe our implementation in the DPLL($T$)-based SMT solver \cvc
and an initial experimental evaluation of this implementation.
 

\subsection{Related work}

In the SMT community, the desire to support a theory of finite sets with
cardinality goes at least as far back as a proposal by Kr\"oning et al.~\cite{kroning2009proposal}.
That article focuses on formalizing the semantics and representation of
the theory within the context of the SMT-LIB standard, rather than on a
decision procedure for deciding it.

There is a stream of research on exploring decidable fragments of set theory 
(often referred to in the
literature as syllogistics)~\cite{cantone01}.
One such subfragment is MLSS, more precisely, the ground set-theoretic
fragment with basic Boolean set operators (union, intersection, set
difference), singleton operator and membership predicate.
A tableau-based procedure for this fragment was introduced
by Cantone and Zarba~\cite{cantone1998}.
The part of our calculus covering this fragment builds on their work.
De Moura and Bj{\o}rner presented an extension of the theory of 
arrays~\cite{de2009generalized}
that can be used to encode the MLSS fragment.  
However, this approach cannot be used to encode cardinality constraints.

In this paper, we consider an extension of the MLSS fragment with set cardinality operations,
whose decidability was established by Zarba~\cite{Zarba02frocos,Zarba2005}.
The decision procedure described by Zarba involves making an upfront guess 
that is exponential
in the number of set variables, making it non-incremental and highly
impractical.  That said, the focus of that work is on establishing
decidability and not on providing an efficient procedure.

Another closely related logical fragment is the Boolean
Algebra and Presburger Arithmetic (BAPA) fragment, for which
several algorithms have been proposed \cite{KNR06,KR07,SSK11}. Though
BAPA does not have the membership predicate or the singleton operator
in its language, Suter et al.~\cite[Section 4]{SSK11} show how one can
generalize their algorithm for such reasoning. Intuitively, 
singleton sets can be simulated by imposing a cardinality constraint $\opcard{X}=1$.
Similarly, membership constraints of the form $x \opin S$ can be encoded 
as $X \opsubset S$ by introducing a singleton subset $X$.
This reduction can lead to significant inefficiencies, however.  Consider the following simple example:
$x \opin S_1 \opunion \inparen{S_2 \opunion  \inparen{\ldots \opunion \inparen{S_{99} \opunion S_{100}}}}$.
%
In our
calculus, a straightforward repeated application of one of the rules for set unions can determine the satisfiability of this constraint.
In contrast, in a reduction to BAPA, membership reasoning is
reduced to reasoning about cardinalities of different sets.
For example, the algorithm in \cite{SSK11} will reduce the problem
to an arithmetic problem involving variables for $2^{101}$ Venn
regions derived from $S_1$, $S_2$, $\ldots$, $S_{100}$, and the
singleton set introduced for $x$.

The broader point is that reasoning about the cardinalities of Venn regions is the
main bottleneck for this fragment. As we show in our calculus, it is possible
to avoid using Venn regions for membership predicates by instead reasoning
about them directly.  For explicit cardinality constraints, our calculus
minimizes the number of Venn regions that need to be considered by reasoning
about only a limited number of relevant regions introduced lazily.

A procedure for cardinality constraints over multisets is considered in~\cite{DBLP:conf/vmcai/PiskacK08}.
A recent procedure for reasoning about sets and measure functions is given by Bender et al~\cite{DBLP:conf/cade/BenderS17},
which also relies on a reduction from set reasoning to arithmetic reasoning.
Reasoning about sets with cardinality constraints in the context 
of invariant checking for bounded model checking is considered by Alberti et al.~\cite{DBLP:conf/cade/AlbertiGP16},
and in the context of invariant synthesis by von Gleissenthall et el.~\cite{DBLP:conf/pldi/GleissenthallBR16}.
These works too rely on reductions to arithmetic and do not involve the use of dedicated decision procedures for 
sets in SMT solvers.
Other procedures for reasoning about sets include a
unification-based approach by Cristi{\'{a}} et al.~\cite{DBLP:conf/cav/CristiaR16}.

The theory we consider in this paper can be seen as the combination
of Presburger arithmetic with the theory of finite sets, with the cardinality
operator acting as a \emph{bridging function} between the two theories.
Decision procedures for non-disjoint combinations of theories 
with bridging functions have been studied 
by Sofronie-Stokkermans~\cite{Sofronie-Stokkermans2009} and 
Chocron et al.~\cite{DBLP:conf/cade/ChocronFR15}.
Their main contribution is the identification of restrictions on the theories
and the development of combination methods that allow one to construct
a decision procedure for the combined theory as a modular combination 
of the decision procedures for the component theories.
That work is mostly limited to cases of bridging functions from the theory
of algebraic datatypes to other theories, where the bridging function is definable
by recursion over constructor terms. 
It does not apply to our setting because neither our source theory 
nor the bridging function match those requirements.
Our approach is similar though in that it tries to separate as much as 
possible the reasoning about sets proper from the reasoning about their cardinality,
so as to leverage off-the-shelf linear integer arithmetic solvers
in order to reason about cardinalities. 

\subsection{Formal Preliminaries}
We work in the context of many-sorted first-order logic with equality.
We assume the reader is familiar with the following notions:
signature, term, literal, formula, free variable, 
interpretation, and satisfiability of a formula in an interpretation
(see, e.g.,~\cite{BarSST-09} for more details).
Let $\Sigma$ be a many-sorted signature.
We use $\opequal$ as the (infix) logical symbol for equality
for all sorts in $\Sigma$
and always interpret it as the identity relation.
If $e$ is a term or a formula, we denote by $\var e$ the set 
of $e$'s free variables, extending the notation to tuples and sets 
of terms or formulas as expected.

If $\varphi$ is a $\Sigma$-formula and $\I$ a $\Sigma$-interpretation,
we write $\I \models \varphi$ if $\I$ satisfies $\varphi$.
If $t$ is a term, we denote by $t^\I$ the value of $t$ in $\I$.
A \define{theory} is a pair $T = (\Sigma, \Mo)$, where 
$\Sigma$ is a signature and  $\Mo$ is a class of $\Sigma$-interpretations
that is closed under variable reassignment
(i.e., every $\Sigma$-interpretation that differs from one in $\Mo$
only in how it interprets the variables is also in $\Mo$).
We refer to $\Mo$ as the \define{models} of $T$.
A $\Sigma$-formula $\varphi$ is 
\define{satisfiable} (resp., \define{unsatisfiable}) \define{in $T$} 
if it is satisfied by some (resp., no) interpretation in $\Mo$.
A set $\Gamma$ of $\Sigma$-formulas \define{entails in $T$} a $\Sigma$-formula $\varphi$,
written $\Gamma \ent[T] \varphi$,
if every interpretation in $\Mo$ that satisfies all formulas in $\Gamma$
satisfies $\varphi$ as well.
We write $\ent[T] \varphi$ as an abbreviation for $\emptyset \ent[T] \varphi$.
We write $\Gamma \ent \varphi$ to denote that 
$\Gamma$ entails $\varphi$ in the class of all $\Sigma$-interpretations.
The set $\Gamma$ is \define{satisfiable in $T$} if $\Gamma \not\ent[T] \ff$
where $\ff$ is the universally false atom.
Two $\Sigma$-formulas are \define{equisatisfiable in $T$}
if for every model $\str I$ of $T$ that satisfies one, there is a model 
of $T$ that satisfies the other and differs from $\str I$ at most 
over the free variables not shared by the two formulas.
When convenient, we will tacitly treat a finite set of formulas 
as the conjunction of its elements and vice versa. 



\section{A Theory of Finite Sets with Cardinality} \label{formalsetup}
\begin{figure}[t]
\flushleft
\textbf{Constant and function symbols:}
\medskip

\(
\begin{array}{l@{\hspace{2em}}l@{\hspace{1.5em}}l@{\hspace{1.5em}}l@{\hspace{1.5em}}l}
 & \multicolumn{2}{l}{n: \cardSort \text{\;\ for all } n \in \mathbb{N}} 
 & \opminus: \cardSort \to \cardSort 
 & \opplus: \cardSort \times \cardSort \to \cardSort
 \\[1ex]
 & \opemptyset: \setSort
 & \opcar : \setSort \to \cardSort
 & \opsingleton{\cdot} : \elemSort \to \setSort
 & \opunion, \opinter, \opsetminus : \setSort \times \setSort \to \setSort
\end{array}
\)
\bigskip

\textbf{Predicate symbols:}
\medskip

\(
\begin{array}{l@{\qquad}l@{\qquad}l@{\qquad}l@{\qquad}l}
 & \opless: \cardSort \times \cardSort 
 & \opgeq: \cardSort \times \cardSort 
 & \opsubset\, : \setSort \times \setSort
 & \opin\, : \elemSort \times \setSort
\end{array}
\)
\caption{The signature of $\settheory$.}
\label{fig:symbols}
\end{figure}

We are interested in a typed theory $\settheory$ of finite sets with cardinality.
In a more general logical setting,
this theory would be equipped with a parametric set type, with a type parameter 
for the set's elements, and a corresponding collection of polymorphic set
operations.\footnote{In fact, this is the setting supported in our implementation in \cvc.}
For simplicity here, we will describe instead a many-sorted theory of sets
of sort $\setSort$ whose elements are all of sort $\elemSort$.
The theory $\settheory$ can be combined with any other theory $\theory T$ 
in a standard way, i.e., Nelson-Oppen-style, 
by identifying the $\elemSort$ sort with a sort in $\theory T$ but
with the restriction that the sort must be interpreted in $\theory T$ as an infinite
set.\footnote{An extension that allows the sort to be interpreted as finite
  by relying on polite combination~\cite{JB10-LPAR} is left to
  future work.}  
Note that the many-sorted setting limits us to sets of elements of the same type
(so sets such as $\{1,\, \{2,3\},\, \{\{5\}\} \}$ are not representable).
Also, we limit our language to consider only \emph{flat} sets 
(i.e., no sets of sets of integers, say)
although this restriction can be lifted by combining
$\theory T$ with (copies of) itself using Nelson-Oppen combination.
More generally, an input having set constraints over multiple element types $T_1, \ldots, T_n$
can be handled by invoking $n$ copies of our procedure for these sorts
and combining them in the standard way.
The theory $\settheory$ has also a sort $\cardSort$ for terms denoting set cardinalities.
Since we consider only finite sets, all cardinalities will be natural numbers.
%

Atomic formulas in $\settheory$ are built over a signature with these three sorts,
and an infinite set of variables for each sort.
Modulo isomorphism, $\settheory$ is the theory of a single many-sorted structure,
and its models differ in essence only on how they interpret the variables.
Each model of $\settheory$ interprets $\elemSort$ 
as some \emph{countably infinite} set $E$,
$\setSort$ as the set of \emph{finite} subsets of $E$, and
$\cardSort$ as $\mathbb{N}$.
The signature of $\settheory$ has the following predicate and function symbols,
summarized in Figure~\ref{fig:symbols}:
the usual symbols of \emph{linear} integer arithmetic,
the usual set composition operators, an empty set ($\opemptyset$) and 
a singleton set ($\opsingleton{\cdot}$) constructor,\footnote{%
We will use $\emptyset$, $\{$, and $\}$ also to denote sets at the meta level.
The difference between their two uses should be clear from context.
}
and a cardinality operator 
($\opcard{\cdot}$), all interpreted as expected.
The signature includes also symbols for the cardinality comparison ($\opless, \opgeq$),
subset ($\opsubset$) and membership ($\opin$) predicates.

We call \define{set term} any term of sort $\setSort$, and
\define{cardinality term} any term of sort $\cardSort$ with no occurrences 
of $\opcard{\cdot}$.
A \define{set constraint} is an atomic formula of the form 
$\aset \opequal \bset$, $\aset \opsubset \bset$, $e \opin \aset$ or their negation,
with $\aset$ and $\bset$ set terms or of the form $\opcard \aset$, and $e$ a term of sort $\elemSort$.
A \define{cardinality constraint} is a [dis]equality $[\lnot] c \opequal d$ or 
an inequality $c \opless d$ or $c \opgeq d$ where $c$ and $d$ are cardinality terms.
An \define{element constraint} is a [dis]equality $[\lnot] x \opequal y$
where $x$ and $y$ are variables of sort $\elemSort$.
A \define{$\settheory$-constraint} is a set, cardinality or element constraint.
We write $u \not\opequal v$ and $e \opnotin \bset$ respectively 
as an abbreviation of $\lnot u \opequal v$ and $\lnot e \opin \bset$.

We use $x$, $y$ for variables of sort $\elemSort$; $\asetv$,
$\bsetv$, $\csetv$ for variables of sort $\setSort$; $\aset$, $\bset$,
$\cset$, $\dset$ for terms of sort $\setSort$; and $c$ with subscripts
for variables of sort $\cardSort$. Given $\mathcal{C}$, a set of
constraints, $\varsof(\mathcal{C})$
(respectively, $\realtermsof{\mathcal{C}}$) denotes the set of variables
(respectively, terms) in $\mathcal{C}$.
For notational convenience, we fix an injective mapping from terms of sort $\setSort$
to variables of sort $\cardSort$ that allows us to associate to each set term $s$ 
a unique cardinality variable $c_s$.

We are interested in checking the satisfiability in $\settheory$ of conjunctions 
of $\settheory$-constraints.
While this problem is decidable, it has high worst-case time complexity~\cite{Zarba02frocos}.
So our efforts are in the direction of producing a solver for $\settheory$-constraints
that is efficient in practice, in addition to being correct and terminating.
Our solver relies on the modular combination of a solver for set constraints and
an off-the-shelf solver for linear integer arithmetic, 
which handles arithmetic reasoning over set cardinalities.


\section{A Calculus for the Theory}
\label{sec:sets:calculus}

In this section, we describe a tableaux-style calculus capturing
the essence of our combined solver for $\settheory$.
As we describe in the next section, that calculus admits a proof procedure
that decides the satisfiability of $\settheory$-constraints.

\begin{restr}\em 
\label{restr}
For simplicity, we consider as input to the calculus
only finite sets $\mathcal C$ of constraints whose set constraints are 
in \define{flat form}.
The latter are (well-sorted) set constraints of the form
$\asetv \opequal \bsetv$, $\asetv \not\opequal \bsetv$,
$\asetv \opequal \opemptyset$,
$\asetv \opequal \opsingleton{x}$,
$\asetv \opequal \bsetv \opunion \csetv$,
$\asetv \opequal \bsetv \opinter \csetv$,
$\asetv \opequal \bsetv \opsetminus \csetv$,
$x \opin \asetv$, $x \opnotin \asetv$, or
$c_\asetv \opequal \opcard{\asetv}$,
where $\asetv$, $\bsetv$, $\csetv$, $c_\asetv$, and $x$ are variables 
of the expected sort.
We also assume that any set variable $\asetv$ of $\mathcal C$ appears in at most 
one union, intersection or set difference term.
Thanks to common equisatisfiability-preserving transformations
all of these assumptions can be made without loss of generality~\cite[Chapter 10]{cantone01}.
These transformations include intermediate steps that replace constraints 
of the form $s \opsubset t$ with $s \opequal (s \opinter t)$.
They also include steps that 
replace each occurrence $i$ of the same term $t$ in union, intersection or set difference terms 
by a fresh variable $\bsetv_i$ 
while adding the equality constraint $\bsetv_i \opequal t$.

\end{restr}
\reviewerbis{
Restriction 3.1

First, a small development of the ``common satisfiability-preserving
transformations'' would be welcome. In particular for ``any set variable S of C
appears in at most one union, intersection or set difference term''. If S appears
in several of them, you give each occurrence different names and add the
equalities between the names?

On that note, how does that affect the number of vertices in the graph, if this
appears to be a bottleneck?

This is related to the importance of Restriction 3.1 on that graph, which I
hadn't fully measured on the first read. I had unwittingly reasoned with a
semantical understanding of the expressions in the graph vertices, or at least
considered intersections and unions to be AC operators, so that, when adding an
edge to $(T \opinter U)$ and later adding an edge to $(U\opinter T)$, the vertices would
be identified, thus limiting the number of vertices in the graph. If I
understand correctly, this can never happen because set variables only appear at
most once in such constructs, so the above would be written $(T \opinter U)$ and
$(U'\opinter T')$ with $T=T'$ and $U=U'$. So in effect, vertices would be duplicated if
they only differ by AC operations (right?), which I feared could be a source of
inefficiency if the number of vertices is critical.

But in any case, it seems that Restriction 3.1 is also what you use to make sure
that, once a vertex is no longer a leaf, no outgoing edge is ever going to be
added to that vertex in the future (otherwise, rules like INTRODUCE INTER and
INTRODUCE SET DIFFERENCE could well add new edges to non-leaves), and this is
why the property that a vertex is the disjoint union of its children is
maintained. This is quickly explained on page 9, although for me it became
useful to realise that "edges from a set variable node are added to the
graph only once" actually applies to every node. Isn't that right?
It became useful to me when I read the proof of termination (see below).

Related to this, I wondered why the rules INTRODUCE UNION, INTRODUCE INTER and
INTRODUCE SET DIFFERENCE feature the condition that the union / intersection /
set difference is not already a vertex (the rule INTRODUCE EMPTY SET does not,
because it is understood that the rule does not apply if the graph is
unchanged). I understand that an intersection can be added to the graph by rule
MERGE EQUALITY II, and allowing the later application of INTRODUCE INTER on that
intersection would change the graph (so you forbid it explicitly). But for set
difference or union? It seems to me that if it is already a vertex, then Fig. 5
is already included in the graph (in case of set difference, the
"intersection+difference" part of Fig.5). Is that incorrect? And is it also
correct that union nodes are never leaves?
}

The calculus is described as a set of derivation rules
which modify a \define{state} data structure. 
A state is either the special state \unsat or a tuple of the form
$\langle \setconst, \memconst, \arithconst, \graphconst \rangle$,
where 
\begin{itemize}
\item $\setconst$ is a set of set constraints,
\item $\memconst$ is a set of element constraints,
\item $\arithconst$ is a set of cardinality constraints, and 
\item $\graphconst$ is a directed graph over set terms with nodes $\vertices$
and edges $\edges$.
\end{itemize}
Initial states have the form 
$\langle \setconst_0, \memconst_0, \arithconst_0, \graphconst_0 \rangle$
where $\graphconst_0$ is the empty graph and 
$(\setconst_0, \memconst_0, \arithconst_0)$ is a partition of a given set of constraints
$\mathcal C$ satisfying Restriction~\ref{restr}.

Since cardinality constraints can be processed by a standard arithmetic solver,
and element constraints 
by a simple equality solver,\footnote{
Recall that $\settheory$ has no terms of sort $\elemSort$ besides variables.
}
we present and discuss only rules that deal with set constraints.

%
The derivation rules are provided in Figures~\ref{fig:set-rules1} 
through \ref{fig:card-mem-interact} 
in \define{guarded assignment form}.
In such form, the premises of a rule refer to the current state and
the conclusion describes how each state component is changed, if at all, 
by the rule's application.
A derivation rule \define{applies} to a state $\sigma$ 
if all the conditions in the rule's premises hold for $\sigma$ \emph{and} 
the resulting state is different from $\sigma$.
In the rules, we write $S,t$ as an abbreviation for $S \cup \{t\}$.
Rules with two or more conclusions separated by the symbol $\parallel$ 
are non-deterministic branching rules.

The rules are such that it is possible to generate a closed tableau
(or \define{derivation tree}) from an initial state
$\langle \setconst_0, \memconst_0, \arithconst_0, \graphconst_0 \rangle$,
where $\setconst_0$, $\memconst_0$, and $\arithconst_0$ satisfy 
Restriction~\ref{restr} and $\graphconst_0$ is an empty graph,
if and only if 
$\setconst_0 \cup \memconst_0 \cup \arithconst_0$ is unsatisfiable in $\settheory$.
Broadly speaking, the derivation rules can be divided into three categories.  
First are those that reason about membership constraints 
(of form $x \opin S$). 
These rules only update the components $\setconst$ and $\memconst$ of the current state,
although their premises may depend on other parts of the state,
in particular, the nodes of the graph $\graphconst$. 
Second are rules that handle constraints of the form $c_S \opequal \opcard{S}$. 
The graph incrementally built by the calculus is central to satisfying these constraints. 
Third are rules for propagating element and cardinality constraints,
respectively to $\memconst$ and $\arithconst$.

\begin{figure}[t]
\begin{tabular}{c}
\infrulenew[\ruleUnionDownI]
          {x \opnotin s \opunion t \in \setconstclosed}
          {\setconst := \closedadd{\closedadd{\setconst}{x \opnotin s}}{x \opnotin t}}
\quad
\infrulenew[\ruleUnionDownII]
          {x \opin s \opunion t \in \setconstclosed \\ 
           \{u,v\} = \{s,t\} \\
           x \opnotin u \in \setconstclosed}
          {\setconst := \closedadd{\setconst}{x \opin v}}
\\[4ex]
\infrulenew[\ruleUnionUpI]
          {x \opnotin s \in \setconstclosed \\ x \opnotin t \in \setconstclosed \\ s \opunion t \in \termsof{S} }
          {\setconst := \closedadd{\setconst}{x \opnotin s \opunion t}}
\quad
\infrulenew[\ruleUnionUpII]
          {x \opin u \in \setconstclosed \\
           u \in \{s,t\} \\
           s \opunion t  \in \termsof{S}}
          {\setconst := \closedadd{\setconst}{x \opin s \opunion t}}
\\[4ex]
\infrulenew[\ruleInterDownI]
          {x \opin s \opinter t \in \setconstclosed}
          {\setconst := \closedadd{\closedadd{\setconst}{x \opin s}}{x \opin t}}{}
\quad
\infrulenew[\ruleInterDownII]
          {x \opnotin s \opinter t \in \setconstclosed \\
           \{u,v\} = \{s,t\} \\
           x \opin u \in \setconstclosed}
          {\setconst := \closedadd{\setconst}{x \opnotin v}}
\\[4ex]
\infrulenew[\ruleInterUpI]
          {x \opin s \in \setconstclosed \\
            x \opin t \in \setconstclosed \\
            s \opinter t \in \termsof{S} }
          {\setconst := \closedadd{\setconst}{x \opin s \opinter t}}
\quad
\infrulenew[\ruleInterUpII]
          {x \opnotin u \in \setconstclosed\\
           u \in \{s,t\} \\
           s \opinter t  \in \termsof{S}}
          {\setconst := \closedadd{\setconst}{x \opnotin s \opinter t}}
\\[4ex]
\infrulenew[\ruleUnionUpSplit]
          {x \opin s \opunion t \in \setconstclosed \\
            x \opin s, 
            x \opin t \not\in \setconstclosed}
          {\setconst := \closedadd{\setconst}{x \opin s} 
            \ \parallel\ 
            \setconst := \closedadd{\setconst}{x \opin t}}
\\[4ex]
\infrulenew[\ruleInterUpSplit]
          {s \opinter t \in \termsof{S} \\
           \{u,v\} = \{s,t\} \\
            x \opin u \in \setconstclosed \\
          x \opin v, 
          x \not\opin v \not\in \setconstclosed}
          {\setconst := \closedadd{\setconst}{x \opin v}
            \ \parallel\
            \setconst := \closedadd{\setconst}{x \not\opin v}}
\\[4ex]
\infrulenew[\ruleSetDifferenceDownI]
          {x \opin s \opsetminus t \in \setconstclosed}
          {\setconst := \closedadd{\closedadd{\setconst}{x \opin s}}{x \opnotin t}}
\quad
\infrulenew[\ruleSetDifferenceDownII]
          {x \opnotin s \opsetminus t \in \setconstclosed \\ x \opin s \in \setconstclosed}
          {\setconst := \closedadd{\setconst}{x \opin t}}
\\[4ex]
\infrulenew[\ruleSetDifferenceDownIII]
          {x \opnotin s \opsetminus t \in \setconstclosed \\ x \opnotin t \in \setconstclosed}
          {\setconst := \closedadd{\setconst}{x \opnotin s}}
\infrulenew[\ruleSetDifferenceUpI]
          {x \opin s \in \setconstclosed \\ x \opnotin t \in \setconstclosed \\ s \opsetminus t \in \termsof{S} }
          {\setconst := \closedadd{\setconst}{x \opin s \opsetminus t}}
\\[4ex]
\infrulenew[\ruleSetDifferenceUpII]
          {x \opnotin s \in \setconstclosed\\
           s \opsetminus t  \in \termsof{S}}
          {\setconst := \closedadd{\setconst}{x \opnotin s \opsetminus t}}
\quad
\infrulenew[\ruleSetDifferenceUpIII]
          {x \opin t \in \setconstclosed\\
           s \opsetminus t  \in \termsof{S}}
          {\setconst := \closedadd{\setconst}{x \opnotin s \opsetminus t}}
\\[4ex]
\infrulenew[\ruleSetDifferenceSplit]
          {s \opsetminus t \in \termsof{S}\\
            x \opin s \in \setconstclosed\\
          x \opin t \not\in \setconstclosed\\
          x \not\opin t \not\in \setconstclosed}
          {\setconst := \closedadd{\setconst}{x \opin t}
            \ \parallel\
            \setconst := \closedadd{\setconst}{x \not\opin t}}
\end{tabular}
\caption{Union, intersection, and set difference rules.}
\label{fig:set-rules1}
\end{figure}

\subsection{Set reasoning rules}\label{sec:algo:mempred}


Figures~\ref{fig:set-rules1} and~\ref{fig:set-rules2} focus on sets without cardinality.  
They are based on the MLSS decision procedure by Cantone and Zarba~\cite{cantone1998},
though with some key differences. First, the rules operate over a set
$\termsof{S}$ of terms with sort $\setSort$ which may be larger than just the
terms in $\setconst$.  This generalization is required because of
additional terms that may be introduced when reasoning about
cardinalities. Second, the reasoning is done modulo equality. A final,
technical difference is that we work with sets of ur-elements rather
than untyped sets.

\newcommand{\opequalclosedfn}[1]{\opequal_{#1}^*}
\newcommand{\opequalclosed}[3]{#2 \opequalclosedfn{#1} #3}
These rules rely on the following additional notation.  
For any set $\mathcal{C}$ of constraints,
let $\termsofsort{\sigma}{\mathcal{C}}$ refer to terms of
sort $\sigma$ in $\mathcal{C}$, with $\realtermsof{\mathcal{C}}$ denoting all terms
in $\mathcal{C}$.
We define the binary relation
$\opequalclosedfn{\mathcal{C}}\ \subseteq \realtermsof{\mathcal{C}} \times \realtermsof{\mathcal{C}}$
to be the reflexive, symmetric, and transitive closure of the relation on
terms induced by the equality constraints in $\mathcal{C}$.
Now, we define the following closures on the components $\memconst$ and $\setconst$ 
of a state $\langle \setconst, \memconst, \arithconst, \graphconst \rangle$:
\begin{align*}
  \memconstclosed = &
  \setbuilder{x \opequal y}{\opequalclosed{\memconst}{x}{y}}
    \cup \setbuilder{x \not\opequal y}
             {\exists x', y'.~\opequalclosed{\memconst}{x}{x'},
               ~\opequalclosed{\memconst}{y}{y'},
               ~x' \not\opequal y' \in \memconst}\\
\setconstclosed = &~\setconst
   \cup \setbuilder{x \opin s}
             {\exists x', s'.~\opequalclosed{\memconst}{x}{x'},
               ~\opequalclosed{\setconst}{s}{s'},
               ~x' \opin s' \in \setconst}\\
 & \phantom{~\setconst} \cup \setbuilder{x \opnotin s}
             {\exists x', s'.~\opequalclosed{\memconst}{x}{x'},
               ~\opequalclosed{\setconst}{s}{s'},
               ~x' \opnotin s' \in \setconst}
\end{align*}
where $x$, $y$, $x'$, $y'$ in $\termsofsort{\elemSort}{\memconst \cup \setconst}$,
and $s$, $s'$ in $\termsofsort{\setSort}{\setconst}$.
Next, we define a left-associative binary operator $\closedaddfn$
that takes as input a set $\mathcal{C}$ of constraints and a single constraint $l$.
Intuitively,
$\closedadd{\mathcal{C}}{l}$ adds $l$ to $\mathcal{C}$
only if $l$ is not in $\mathcal{C}$'s closure. More precisely,
\begin{equation}
  \closedadd{\mathcal{C}}{l} =
  \begin{cases}
    \mathcal{C} & \text{if } l \in \mathcal{C}^*\\
    \mathcal{C} \cup \{l\} & \text{otherwise}
  \end{cases}
\end{equation}

The set of \define{relevant} terms, denoted by $\mathcal{T}$,  
for a state $\langle \setconst, \memconst, \arithconst, \graphconst \rangle$
consists of all terms from $\setconst$ and $\graphconst$, namely:
$\realtermsof{\setconst} \cup \vertices$.

Figure~\ref{fig:set-rules1} shows the rules for reasoning about membership in
unions, intersections,
and differences.
Each rule covers one case in which a new
membership (or non-membership) constraint can be deduced.  The justification for these
rules is straightforward based on the semantics of the set operations.  \
The restriction $\{u,v\} = \{s,t\}$ in the premise of some of the rules
cover all the various cases where $s$, say, is the same as $t$, different from $t$, 
the same as $u$, and the same as $v$.
Figure~\ref{fig:set-rules2} shows rules for
singletons, disequalities, and contradictions.  Note in particular that the
\ruleSetDiseq\ rule introduces a fresh variable $y$, denoting an element
that is in one set but not in the other.

\begin{figure}[t]
\centering
\begin{tabular}{c}
 \infrulenew[\ruleSingle]
    {\opsingleton{x} \in \termsof{\setconst}}
    {\setconst := \closedadd{\setconst}{x \opin \opsingleton{x}}}
\quad
\infrulenew[\ruleSingleMem]
    {x \opin \opsingleton{y} \in \setconstclosed}
    {\memconst := \closedadd{\memconst}{x \opequal y}}
\quad
\infrulenew[\ruleSingleNonMem]
    {x \opnotin \opsingleton{y} \in \setconstclosed}
    {\memconst := \closedadd{\memconst}{x \not\opequal y}}
\\[4ex]
\infrulenew[\ruleSetDiseq]
          {s \not\opequal t \in \setconst\\
            \setbuilder{x \in \termsof{\setconst}}{x \opin s, x \opnotin t \in \setconstclosed} = \emptyset \\
           \setbuilder{x \in \termsof{\setconst}}{x \opnotin s, x \opin t \in \setconstclosed} = \emptyset \\
          }
          {\setconst := \closedadd{\closedadd{\setconst}{y \opin s}}{y \opnotin t}
            \quad\parallel\quad
            \setconst := \closedadd{\closedadd{\setconst}{y \opnotin s}}{y \opin t}}
\\[4ex]
\infrulenew[\ruleEqUnsat]
    {(x \not\opequal x) \in \memconstclosed}
    {\unsat}
\quad
\infrulenew[\ruleSetUnsat]
    {(x \opin s) \in \setconstclosed\\
      (x \not\opin s) \in \setconstclosed}
    {\unsat}
\quad
\infrulenew[\ruleEmptyUnsat]
    {(x \opin \emptyset) \in \setconstclosed}
    {\unsat}
\end{tabular}
\caption{Singleton, disequality and contradiction rules.
Here, $y$ is a fresh variable.
}
\label{fig:set-rules2}
\end{figure}


\begin{exa} \label{ex:setconstraints}
  Let 
  \[ \setconst = \{ S \opequal A
  \opunion B,\, S \opequal C\opinter{D},\, x \opin C,\, x\not\opin D,\, y \not\opin S,\, y \opin D \} .
  \]
  Using the rules in Figure~2, we can directly deduce the additional
  constraints: $x \not\opin C\opinter{D}$ (by \ruleInterUpII), $x\not\opin A$,
  $x\not\opin B$, $y\not\opin A$, $y\not\opin B$ (by \ruleUnionDownI), and $y
  \not\opin C$ (by \ruleInterDownII).  This gives a complete picture, modulo
  equality, of exactly which sets contain $x$ and $y$.
  \qed
\end{exa}

\subsection{Cardinality of sets}\label{sec:algo:card}

The next set of rules, described in Figure~\ref{fig:introduce} and Figure~\ref{fig:merge}, 
operate on the graph component of the current state.
Their purpose is to modify the graph so as to capture the mutual dependencies 
between set and cardinality constraints.
They are based on the observation that
$(i)$ the cardinality of two sets, and that of their union, intersection and set
difference are interrelated; and
$(ii)$ if two set terms are asserted to be equal, their cardinalities must match.

\begin{figure}[t]
\begin{minipage}{.49\textwidth}
\hspace*{.2\textwidth}
\begin{tikzpicture}
  \draw (0,0) circle (3em);
  \draw (3em,0) circle (3em);
  \node at (0,-4em) {$\bsetv$};
  \node at (3em,-4em) {$\csetv$};
  \node at (-1.5em,0) {$\bsetv \opsetminus \csetv$}; 
  \node at (1.5em,0) {$\bsetv \opinter \csetv$};
  \node at (4.5em,0) {$\csetv \opsetminus \bsetv$};
  \node at (1.5em,4em) {$\bsetv \opunion \csetv$};
\end{tikzpicture}
\caption{Venn regions for $\bsetv$ and $\csetv$.}
\label{fig:venn-regions}
\end{minipage}
\begin{minipage}{.49\textwidth}
\vspace*{20pt}
\hspace*{.1\textwidth}
\begin{tikzpicture}[node distance = 1cm]
\node (A) {$\bsetv$};
\node (AuB) [right=of A] {$\bsetv \opunion \csetv$};
\node (B) [right=of AuB] {$\csetv$};
\node (AmB) [below=of A] {$\bsetv \opsetminus \csetv$};
\node (AiB) [below=of AuB] {$\bsetv \opinter \csetv$};
\node (BmA) [below=of B] {$\csetv \opsetminus \bsetv$};
\draw[->] (A) -- (AmB);
\draw[->] (A) -- (AiB);
\draw[->] (AuB) -- (AmB);
\draw[->] (AuB) -- (AiB);
\draw[->] (AuB) -- (BmA);
\draw[->] (B) -- (AiB);
\draw[->] (B) -- (BmA);
\end{tikzpicture}
\vspace*{11pt}
\caption{The same structure as a graph.}
\label{fig:graph-regions}
\end{minipage}
\end{figure}

Figure~\ref{fig:venn-regions} shows the Venn regions for two sets, $\bsetv$ and $\csetv$.
The fact that $\bsetv \opsetminus \csetv$, $\bsetv \opinter \csetv$ and 
$\csetv \opsetminus \bsetv$ are disjoint imposes the following relationships 
between their cardinalities and those of $\bsetv$, $\csetv$ and $\bsetv \opunion \csetv$:
\begin{align*}
\opcard{\bsetv} & \,\opequal\, \opcard{\bsetv \opsetminus \csetv} \:\opplus\: \opcard{\bsetv \opinter \csetv}\\
\opcard{\bsetv \opunion \csetv} & \,\opequal\, \opcard{\bsetv \opsetminus \csetv} \:\opplus\: \opcard{\bsetv \opinter \csetv} \:\opplus\: \opcard{\csetv \opsetminus \bsetv}\\
\opcard{\csetv} & \,\opequal\, \opcard{\csetv \opsetminus \bsetv} \:\opplus\: \opcard{\bsetv \opinter \csetv}\ .
\end{align*}

We can represent these same relationships using the graph 
in Figure~\ref{fig:graph-regions}.  
The nodes of the graph are set terms, and each node has the property of being the
disjoint union of its children in the graph.
Our calculus incrementally constructs a similar graph containing 
all nodes whose cardinality is implicitly or explicitly constrained by the current state. 
Set terms with implicit cardinality constraints include
$(i)$ union, intersection, and set difference terms appearing in $\setconst$, for
which one of the operands is already in the graph; and 
$(ii)$ terms occurring in an equality whose other member is already in the graph.
A careful analysis\footnote{See completeness proof in~\cite[Chapter 2]{B16} for further details.} reveals 
that we can actually avoid adding intersection terms $\bset \opinter \cset$ unless both 
$\bset$ and $\cset$ are already in the graph, and set
difference terms $\bset \opsetminus \cset$ unless $\bset$ is already in the graph.

%
The rules in Figure~\ref{fig:introduce} make use of a function $\add$ which takes 
a graph $\graphconst$ and a term $\aset$ and returns the graph $\graphconst'$ 
defined as follows:
\begin{enumerate}
\item For $\aset = \bsetv$ or $\aset = \opemptyset$ or
  $\aset = \opsingleton{x}$:
\begin{itemize}
\item[] $V(\graphconst') = \vertices \cup \{ \aset \}$
\item[] $E(\graphconst') = \edges$
\end{itemize}
\item For $\aset = \bsetv \opinter \csetv$ or $\aset = \bsetv \opsetminus
  \csetv$:
\begin{itemize}
\item[] $V(\graphconst') = V_2 = \vertices \cup \{ \bsetv, \csetv, \bsetv \opsetminus \csetv,
  \bsetv \opinter \csetv, \csetv \opsetminus \bsetv \}$
\item[] $E(\graphconst') = E_2 = \edges \cup \{ (\bsetv, \bsetv \opsetminus \csetv), (\bsetv, \bsetv \opinter \csetv)$,
$(\csetv, \bsetv \opinter \csetv)$, $(\csetv, \csetv \opsetminus \bsetv) \}$
\end{itemize}
\item For $\aset = \bsetv \opunion \csetv$ and $V_2$ and $E_2$ as above:
\begin{itemize}
\item[] $V(\graphconst') = V_2 \cup \{\bsetv \opunion \csetv\}$
\item[] $E(\graphconst') = E_2 \cup \{(\bsetv \opunion \csetv, \bsetv \opsetminus \csetv),
(\bsetv \opunion \csetv, \bsetv \opinter \csetv), (\bsetv \opunion \csetv,
\csetv \opsetminus \bsetv) \}$
\end{itemize}
\end{enumerate}
Recall that, by assumption, each set variable participates in at
most one union, intersection, or set difference in the input set of constraints.
It is not difficult to see that this property is preserved by every rule.
This ensures that edges from a set variable node are added to the graph only once,
maintaining the invariant that its children in the graph are disjoint. The only
other rule which adds edges to the graph is the \ruleMergeEqII rule, but it only
adds nodes from the leaves of the graph, creating a new set of disjoint leaves.

Terms with explicit constraints on their cardinality are added to the graph 
by rule \ruleIntroEqCard.  
Terms that have implicit constraints on their cardinality, specifically, singletons and the empty set,
are added by rules \ruleIntroSingleton\ and \ruleIntroEmtpy.

If two nodes $\aset$ and $\bset$ in the graph are explicitly asserted to be equal
(that is, $\aset \opequal \bset \in \setconst$ or $\bset \opequal \aset \in \setconst$), 
we can ensure they have the same cardinality by systematically modifying the graph
as follows.
Let $\nonemptyleaves{n}$ denote the set of leaf nodes for the subtree
rooted at node $n$ which are not known to be empty. Formally,
\begin{equation}
  \nonemptyleaves{n} = \setbuilder{n' \in \leaves{n}}{ n' \opequal \opemptyset \not\in \setconstclosed},
  \label{eq:nonemptyleaves}
\end{equation}
where $\leaves{v} = \setbuilder{w \in \vertices}{\neigh{w} = \emptyset, w \text{ is reachable from } v}$
and $\neigh{w}$ denotes the children of $w$.
We call two nodes $n$ and $n'$ \define{merged} if they have the same set of
nonempty leaves, that is if $\nonemptyleaves{n} = \nonemptyleaves{n'}$.

\begin{figure}[t]
\centering
\begin{tabular}{c}
\infrulenew[\ruleIntroEqRight]
          {\asetv \opequal \bset \in \setconst \\
           \asetv \in \vertices \\
           \bset \not\in \vertices }
          {\graphconst := \add(\graphconst, \bset)}
\quad
\infrulenew[\ruleIntroEqLeft]
          {\asetv \opequal \bset \in \setconst \\
            \asetv \not\in \vertices \\ 
            \bset \in \vertices }
          {\graphconst := \add(\graphconst, \asetv)}
\\[4ex]
\infrulenew[\ruleIntroEqUnion]
          {\asetv \opequal \bsetv \opunion \csetv \in \setconst \\
           \bsetv \opunion \csetv \not\in \vertices \\
           \bsetv\in\vertices \text{ or } \csetv\in\vertices}
          {\graphconst := \add(\graphconst, \bsetv \opunion \csetv) }
\\[4ex]
\infrulenew[\ruleIntroEqInter]
          {\asetv \opequal \bsetv \opinter \csetv \in \setconst\\
           \bsetv \opinter \csetv \not\in \vertices \\
           \bsetv\in\vertices\\ \csetv\in\vertices}
          {\graphconst := \add(\graphconst, \bsetv \opinter \csetv) }
\\[4ex]
\infrulenew[\ruleIntroEqSetDiff]
          {\asetv \opequal \bsetv \opsetminus \csetv \in \setconst\\
           \bsetv\in\vertices\\
           \bsetv \opsetminus \csetv \not\in \vertices}
          {\graphconst := \add(\graphconst, \bsetv \opsetminus \csetv) }
\\[4ex]
\infrulenew[\ruleIntroEqCard]
          {\cardaset \opequal \opcard{\asetv} \in \setconst}
          {\graphconst := \add(\graphconst, \asetv)}
\quad
\infrulenew[\ruleIntroEqSingle]
          {\opsingleton{x} \in \realtermsof{\setconst}}
          {\graphconst := \add(\graphconst, \opsingleton{x})}
\quad
\infrulenew[\ruleIntroEqEmpty]
          {\quad}
          {\graphconst := \add(\graphconst, \opemptyset)}
\end{tabular}
\caption{Graph extension rules.}
\label{fig:introduce}
\end{figure}

\begin{figure}[t]
\centering
\begin{tabular}{c}
\infrulenew[\ruleMergeEqI]
          {\aset \opequal \bset \in \setconst \\
          \aset, \bset, \opemptyset \in \vertices \\
          \{\cset, \dset\} = \{\aset, \bset\} \\
          \nonemptyleaves{\cset} \subsetneq \nonemptyleaves{\dset}}
          {\setconst := \setbuilder{\aset' \opequal \opemptyset}{\aset' \in \nonemptyleaves{\dset} \setminus \nonemptyleaves{\cset}} \cup \setconst}
\\[4ex]
\infrulenew[\ruleMergeEqII]
          {\aset \opequal \bset \in \setconst \\
          \aset, \bset \in \vertices \\
          \nonemptyleaves{\aset} \nsubseteq \nonemptyleaves{\bset} \\
          \nonemptyleaves{\bset} \nsubseteq \nonemptyleaves{\aset}}
          {\graphconst := \merge(\graphconst, \aset, \bset)}
\end{tabular}
\caption{Merge rules.}
\label{fig:merge}
\end{figure}

The rules in Figure~\ref{fig:merge} ensure 
that for all equalities over set terms,
the corresponding nodes in the graph are merged.  Consider an equality $s
\opequal t$.  Rule \ruleMergeEqI\ handles the case when
either $\nonemptyleaves{\aset}$ or $\nonemptyleaves{\bset}$ is a proper subset
of the other by constraining the extra leaves in the superset to
be empty.  Rule \ruleMergeEqII\ handles the remaining case where neither is a
subset of the other.
The graph $\graphconst' = \merge(\graphconst,s,t)$ is defined as follows,
where $L_1 = \nonemptyleaves{\aset} \setminus
\nonemptyleaves{\bset}$ and $L_2 = \nonemptyleaves{\bset} \setminus
\nonemptyleaves{\aset}$:
\begin{eqnarray*}
V(\graphconst') & = & \vertices \cup \setbuilder{ l_1 \opinter l_2}{l_1 \in L_1, l_2 \in L_2} \\
E(\graphconst') & = & \edges \cup \setbuilder{(l_1, l_1 \opinter l_2),(l_2, l_1 \opinter l_2)}{l_1 \in L_1, l_2 \in L_2}
\end{eqnarray*}
\ruleMergeEqII\ introduces a quadratic number of leaves
($\card{L_1} \cdot \card{L_2}$). To reduce the impact of merge operations, a
useful rule to apply early on is the (optional) \ruleGuessEmpty\ rule
in Figure~\ref{fig:graph-rules}. It guesses if a leaf node is equal to
the empty set or not. The use of this rule is illustrated in
Example~\ref{ex:setcardinality}.
Here and in Figure~\ref{fig:card-mem-interact},
$\leaves{\graphconst} = \{v\in\vertices\ |\ \neigh{v}=\emptyset\}$.



In Figure~\ref{fig:graph-rules}
we denote by $\grapharithconst$ the collection of all of the following cardinality constraints
imposed by graph $\graphconst$:
\begin{enumerate}
\item For each set term $\aset \in V(\graphconst)$,
  its cardinality (denoted by its corresponding cardinality variable $\cardaset$)
  is the sum of the cardinalities of its non-empty leaf nodes:
  \[\setbuilder{ \cardaset \opequal \sum_{\bset \in \nonemptyleaves{s}} \cardbset }{ \aset \in \vertices }\]
\item Each cardinality is non-negative:
  \[ \setbuilder{ \cardaset \opgeq 0 }{ \aset \in \vertices } \]
\item Every singleton set has cardinality $1$: 
   \[ \setbuilder{ \cardaset \opequal 1}{ \aset \in \vertices,~\aset = \opsingleton{x} } \]
\item The empty set has cardinality $0$:
  \[ \setbuilder{ \cardaset \opequal 0}{ \aset \in \vertices,~\aset = \opemptyset } \]
\end{enumerate}

\reviewerbis{
You may insist a bit more on the fact that MERGE EQUALITY II is a rule that is
important to pay attention too, if only because its quadratic nature is critical
for performance, it seems. Also indicating that the rule is illustrated in the
next example, Example 3.3.

For instance, it may appear counter-intuitive to prefer applying branching
rules, namely those of $R_1$ or GUESS EMPTY SET before applying the non-branching
rule of merging (as we see for instance in section 5.1).

You may also emphasize a bit more, when presenting GUESS LOWER BOUND, that the
rule is optional, i.e. not needed for completeness. It is said later but the
reader might enjoy the remark when it is presented.
}
Rule \ruleAriContra\  
relies on the arithmetic solver to check 
whether the constraints in $\grapharithconst$ are inconsistent with the input cardinality constraints.  

\begin{figure}[t]
\centering
\begin{tabular}{c}
\infrulenew[\ruleAriContra]
    {\arithconst \cup \grapharithconst \models_{\ariththeory} \bot}
    {\unsat}
\quad
\infrulenew[\ruleGuessEmpty]
          {t \in \leaves{\graphconst}}
          {\setconst := \closedadd{\setconst}{t \opequal \emptyset}
           \quad\parallel\quad
           \setconst := \closedadd{\setconst}{t \not\opequal \emptyset}}
\end{tabular}
\caption{Additional graph rules.}
\label{fig:graph-rules}
\end{figure}

\subsection{Cardinality and membership interaction}\label{sec:algo:interaction}
The rules in Figure~\ref{fig:card-mem-interact}
propagate consequences of set membership constraints to the state components
$\memconst$ and $\arithconst$.
Let $\eqls$ denote the set of equalities in $\memconst$, and let
$\equivcl{\eqls}{x}$ denote the equivalence class of $x$ with respect to $\eqls$. 
In the rules, for term $t$ of sort $\setSort$,
$t_\setconst$ denotes the set
$\setbuilder{\equivcl{\eqls}{x}}{x \opin t \in \setconstclosed}$
of equivalence classes of elements known to be in $t$.
The notation $\arithconst \triviallyimply c_t \geq n$ means that
$c_t \opgeq k \in \arithconst$ for some concrete constant $k \geq n$.

\begin{figure}[t]
\begin{tabular}{c}
\infrulenew[\ruleMemArrange]
    {t \in \allleaves \quad
      \arithconst \nottriviallyimply c_t \geq \card{t_\setconst} \quad
      \equivcl{\eqls}{x}, 
      \equivcl{\eqls}{y} \in t_\setconst \quad
      \equivcl{\eqls}{x} \neq \equivcl{\eqls}{y} \quad
      x \not\opequal y \not\in \memconstclosed
    }
    {\memconst := \closedadd{\memconst}{x \opequal y}
     \quad\parallel\quad
     \memconst := \closedadd{\memconst}{x \not\opequal y}}
\\[4ex]
\infrulenew[\ruleGuessLB]
    {\bset \in \allleaves \\
      \arithconst \nottriviallyimply \cardbset \geq \card{\bset_\setconst}
      \\
      \cardbset \opless \card{\bset_\setconst} \not\in \arithconst
    }
    {\arithconst := \cardbset \opgeq \card{\bset_\setconst}, \arithconst
     \quad\parallel\quad 
     \arithconst := \cardbset \opless \card{\bset_\setconst}, \arithconst}
\\[4ex]
\infrulenew[\rulePropMinsize]
    {x_1 \opin \aset, \ldots, x_n \opin \aset \in \setconstclosed \quad
     \arithconst \nottriviallyimply \cardaset \geq n
     \\
     x_i \not\opequal x_j \in \memconst^* \text{ for all } 1 \leq i < j \leq n
    }
    {\arithconst := \cardaset \opgeq n, \arithconst}
\end{tabular}
\caption{
Cardinality and membership interaction rules.
}
\label{fig:card-mem-interact}
\end{figure}

Rule \ruleMemArrange\ is used to decide which element variables constrained
to be in the same set $t$ should be identified and which should not.
Once applied to completion, Rule \rulePropMinsize\ can then be used
to determine a lower bound for the cardinality of that set.
The (optional) rule \ruleGuessLB\ can be used to short-circuit this process by guessing a
conservative lower bound based on the number of distinct equivalence classes of
elements known to be members of a set.  If this does not lead to a
contradiction, a model can be found without resorting to an extensive use of
\ruleMemArrange.


\reviewer{ It would have been nice to see actually built tableaux in
  the examples. (Optional: reviewer says can be skipped if size is an
  issue.)  }
\begin{exa}
  \label{ex:setcardinality}
  Consider again the constraints from Example~\ref{ex:setconstraints}, but now
  augmented with cardinality constraints:
  \begin{eqnarray*}
   \setconst & = & \{ S \opequal A \opunion B,\, S \opequal C\opinter{D},\, x \opin C,\, x\not\opin D,\, y \not\opin S,\, y \opin D \} \\
   \arithconst & = & \{c_S \opequal \opcard{S},\, c_C \opequal \opcard{C},\, c_D \opequal \opcard{D},\, c_S \opgeq 4,\, c_C + c_D \opless 10\}
  \end{eqnarray*}
  Using the rules in Figure~6, the
  following nodes get added to the graph: $S$, $C$, $D$ (by \ruleIntroEqCard), $A
  \opunion B$, $C \opinter D$ (by \ruleIntroEqRight). Node $A \opunion B$ is added
  with children $A \opsetminus B$, $A \opinter B$, and $B \opsetminus A$; and
  by adding $C \opinter D$, we also get $C \opsetminus D$ and $D \opsetminus
  C$, with the corresponding edges from $C$ and $D$.  Now, using two
  applications of \ruleMergeEqII, we force the sets $S$, $A \opunion B$ and $C
  \opinter D$ to have the same set of 3 leaves, labeled $S \opinter (A
  \opsetminus B) \opinter (C \opinter D)$, $S \opinter (A \opinter B) \opinter
  (C \opinter D)$, and $S \opinter (B \opsetminus A) \opinter (C \opinter D)$.
  Let us call the latter nodes respectively $l_1$, $l_2$, and $l_3$, for convenience.  
  Let us also
  designate $l_4 = C \opsetminus D$ and $l_5 = D \opsetminus C$.  Notice that
  the induced cardinality constraints now include $c_S \opequal c_{l_1} \opplus c_{l_2} \opplus
  c_{l_3}$, $c_C \opequal c_{l_1} \opplus c_{l_2} \opplus c_{l_3} \opplus c_{l_4}$, and $c_D \opequal c_{l_1} \opplus
  c_{l_2} \opplus c_{l_3} \opplus c_{l_5}$.  With the addition of $C \opsetminus
  D$ and $D \opsetminus C$ to the graph, these are also added to
  $\termsof{\setconst}$.  We can then deduce $x \opin C \opsetminus D$ and $y
  \opin D \opsetminus C$ using the rules for set difference.  
  Finally, we can use \rulePropMinsize\ to deduce $c_{l_4} \opgeq 1$
  and $c_{l_5} \opgeq 1$.  It is now not hard to see that using pure arithmetic
  reasoning, we can deduce that $c_C \opplus c_D \opgeq 10$ which leads to \unsat
  using \ruleAriContra.
  \qed
\end{exa}



\section{Calculus Correctness}
\label{sec:sets:correct}

Our calculus is terminating and sound for any derivation strategy, 
that is, regardless of how the rules are applied.
It is also refutation complete for any \define{fair} strategy, defined as a strategy
that does not delay indefinitely the application of an applicable derivation rule.

To prove these properties it is convenient
to partition the derivation rules of the calculus in the following subsets.
\begin{itemize}
\item[] $\mathcal{R}_1$, membership predicate reasoning rules, from Figures 2 and 3.
\item[] $\mathcal{R}_2$, graph rules to reason about cardinality, from Figures 6, 7 and 8.
\item[] $\mathcal{R}_3$, rules from Figure 9 other than Rule \ruleGuessLB.
\item[] $\mathcal{R}_4$, rule \ruleGuessLB.
\end{itemize}

The rules are used to construct derivation trees.
A \define{derivation tree} is a tree over states,
with a root of the form
$\langle \setconst_0, \memconst_0, \arithconst_0, (\emptyset,\emptyset) \rangle$
where $\setconst_0 \cup \memconst_0 \cup \arithconst_0$ satisfies Restriction~\ref{restr}
and the children of each non-root node are obtained by applying one 
of the derivation rules of the calculus to that node.
Let $\mathcal{R}$ be a subset of the derivation rules of the calculus. 
A state is \define{saturated} with respect to $\mathcal{R}$ 
if no rules in $\mathcal{R}$ apply to it.
A branch of a derivation tree is \define{closed} if it ends with \unsat;
it is \define{saturated} with respect to $\mathcal{R}$ if so is its leaf.
A derivation tree is \define{closed} if all of its branches are closed.
A derivation tree \define{derives} from a derivation tree $T$ 
if it is obtained from $T$ by the application of exactly one of the derivation rules 
to one of $T$'s leaves.

\begin{defi}[Derivations]\em
Let $\mathcal{C}$ be a set of $\settheory$-constraints.
A \define{derivation (of $\mathcal{C}$)} is a sequence $(T_i)_{0 \leq i \leq \kappa}$ of derivation trees,
with $\kappa$ finite or countably infinite,
such that $T_{i+1}$ derives from $T_i$ for all $i$,
and
$T_0$ is a one-node tree whose root is a state
$\langle \setconst_0, \memconst_0, \arithconst_0, (\emptyset,\emptyset)\rangle$
where $\setconst_0 \cup \memconst_0 \cup \arithconst_0 = \mathcal{C}$.
A \define{refutation (of $\mathcal{C}$)} is a (finite) derivation of $\mathcal{C}$ that ends with a closed tree.
\end{defi}

\begin{rem}
In the proofs below we implicitly rely on the fact that,
for every state $\langle \setconst, \memconst, \arithconst, \graphconst \rangle$
in a derivation tree, the constraints in $\setconst \cup \memconst \cup \arithconst$
satisfy Restriction~\ref{restr}.
This is the case because the restriction is imposed on root states
and is preserved by all of its rules, as one can easily verify.
\end{rem}

\subsection{Termination}
%
\begin{prop}[Termination] \label{prop:finite-derivations}
%
Let $\mathcal{R}$ collect all rules in our calculus except for (the optional) rule
\ruleGuessLB.
Every derivation 
using only rules from $\mathcal{R}$ is finite.
\end{prop}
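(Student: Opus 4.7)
The plan is to prove termination by showing that every rule application in $\mathcal{R} = \mathcal{R}_1 \cup \mathcal{R}_2 \cup \mathcal{R}_3$ strictly extends one of the five monotonically-growing state components $\setconst, \memconst, \arithconst, \vertices, \edges$, and that each of these components is uniformly bounded throughout any derivation. Since a rule applies only if the resulting state differs from the current one, and since no rule ever removes anything, each application strictly grows at least one of these five sets. Hence termination reduces to bounding each one, which in turn reduces to bounding the set $\mathcal{T}^*$ of all set terms and the set of all element variables that can ever appear, since every constraint and every graph vertex/edge is an object over a bounded signature built from these.

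The central technical step is bounding $\mathcal{T}^*$. Inspection of the rules reveals that new set terms enter the derivation only through (i) the $\add$ function invoked by the graph-extension rules of Figure~\ref{fig:introduce}, which by Restriction~\ref{restr} can only build unions, intersections, and set differences over the original set variables together with the $\opemptyset$ and $\opsingleton{\cdot}$ terms already occurring in $\setconst_0$, and (ii) rule \ruleMergeEqII, which adds intersections $l_1 \opinter l_2$ of existing leaves and so can in principle produce arbitrarily deep terms. Source (i) contributes only finitely many terms, so the whole bound hinges on showing that \ruleMergeEqII fires only finitely often.

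The heart of the proof is the invariant that every set equality ever added to $\setconst$ during the derivation has the form $\aset' \opequal \opemptyset$. Indeed, the rules in $\mathcal{R}_1$ only add (non-)memberships to $\setconst$, while \ruleMergeEqI and \ruleGuessEmpty are the only other rules that can add set equalities, and both add only equalities involving $\opemptyset$. Such equalities cannot trigger \ruleMergeEqII, since $\nonemptyleaves{\opemptyset} = \emptyset$ is trivially a subset of any $\nonemptyleaves{\aset'}$, falsifying one of its premises. Moreover, a single application of \ruleMergeEqII to a pair $(\aset,\bset)$ equalizes $\nonemptyleaves{\aset}$ and $\nonemptyleaves{\bset}$, so it never re-applies to that pair. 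Therefore \ruleMergeEqII fires at most once per set equality in $\setconst_0$, a finite number, each firing adding at most $|L_1|\cdot|L_2|$ new leaves; a straightforward induction on the number of merges then bounds $\mathcal{T}^*$.

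A parallel argument bounds the fresh element variables: \ruleSetDiseq introduces one per set disequality in $\setconst$, after which the witnessing (non-)memberships falsify its premise, and new set disequalities enter $\setconst$ only through \ruleGuessEmpty applied to a leaf, of which there are now finitely many. The bounds on $|\setconst|, |\memconst|, |\arithconst|, |\vertices|, |\edges|$, as well as on the number of applications of \ruleMemArrange, \rulePropMinsize, and \ruleAriContra, are then routine consequences of the bounds on set terms and element variables (each such rule can fire only finitely often per pair/tuple of variables because its premises become false after it adds its conclusion). The main obstacle in the proof is precisely the bootstrapping concern that new leaves produced by \ruleMergeEqII might feed back into further applications of \ruleMergeEqII without bound; the invariant that newly added equalities are restricted to $\opemptyset$-equalities is what closes this feedback loop.
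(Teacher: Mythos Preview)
Your approach---bounding each monotonically growing component of the state rather than exhibiting a nine-component lexicographic ranking function as the paper does---is a legitimate alternative route, and you correctly isolate the crux: bounding the number of \ruleMergeEqII\ applications, together with the observation that the only set equalities ever added during a derivation are of the form $s' \opequal \opemptyset$, which cannot trigger \ruleMergeEqII.

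However, the step ``a single application of \ruleMergeEqII\ to a pair $(s,t)$ equalizes $\nonemptyleaves{s}$ and $\nonemptyleaves{t}$, so it never re-applies to that pair'' has a real gap: you do not argue that this equality of leaf sets is preserved by subsequent rules, and in general it is not. Take $\setconst_0 = \{S \opequal T,\ R_1 \opequal S \opunion W_1,\ R_2 \opequal T \opunion W_2,\ c_S \opequal \opcard{S}\}$. After introducing $S$ and $T$ and applying \ruleMergeEqII\ once, $\nonemptyleaves{S} = \nonemptyleaves{T} = \{S \opinter T\}$. But \ruleIntroEqUnion\ for $S \opunion W_1$ is still applicable (its premise only requires $S \opunion W_1 \notin \vertices$ and $S \in \vertices$) and $\add$ then attaches fresh outgoing edges to $S$; applying it and the symmetric rule for $T \opunion W_2$ yields $\nonemptyleaves{S} = \{S \opinter T,\, S \opsetminus W_1,\, S \opinter W_1\}$ and $\nonemptyleaves{T} = \{S \opinter T,\, T \opsetminus W_2,\, T \opinter W_2\}$, neither a subset of the other, so \ruleMergeEqII\ fires a second time on the same equality $S \opequal T$. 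Your bound ``at most once per equality in $\setconst_0$'' is therefore false as stated. The repair is to note that $\merge$ adds outgoing edges only to current leaves, so the only way an already-merged pair can become unmerged is via an $\add$ call that gives new children to an internal node; $\add$ does this only to set variables, and by Restriction~\ref{restr} each variable sits in at most one compound term, hence is touched by $\add$ at most once. Thus each pair can be unmerged only boundedly many times, and your induction on the number of merges still yields a finite (though larger) bound on $\mathcal{T}^*$. The paper's own argument is actually terse on the analogous point---its claim that $f_1$ is non-increasing under Introduce rules cites a proposition stated only for saturated states---so the issue is genuinely subtle, but your sketch as written does omit a necessary ingredient.
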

\begin{proof}
\newcommand{\wfra}[1]{f_1(#1)} 
\newcommand{\wfre}[1]{f_2(#1)}
\newcommand{\wfri}[1]{f_3(#1)} 
\newcommand{\wfrb}[1]{f_4(#1)} 
\newcommand{\wfrc}[1]{f_5(#1)}
\newcommand{\wfrd}[1]{f_6(#1)}
\newcommand{\wfrf}[1]{f_7(#1)} 
\newcommand{\wfrg}[1]{f_8(#1)} 
\newcommand{\wfrh}[1]{f_9(#1)}
\newcommand{\wfrmax}{9}
Let 
$\langle \setconst_0, \memconst_0, \arithconst_0, (\emptyset, \emptyset) \rangle$ 
be the initial state of the derivation.
We first define a well-founded relation $\termreln$ over states.
Next, we show that application of any rule in $\mathcal{R}$ 
to a leaf of a derivation tree gives smaller states with respect to this relation. 
As the relation is well-founded, it will follow
that the derivation cannot be infinite.

In order to define $\termreln$, we define $f_i$ for $i \in \{1, 2, \ldots, \wfrmax\}$,
each of which maps a state
$\sigma = \langle \setconst, \memconst, \arithconst, \graphconst \rangle$
to a natural number (non-negative integer). We denote the set of natural numbers by $\mathbb{N}$.
\begin{itemize}
\item $\wfra{\sigma}$: number of equalities $t_1 \opequal t_2$ in $\setconst$ such
  that either $t_1 \not \in \vertices$, $t_2 \not\in \vertices$, or
  $\nonemptyleaves{t_1} \neq \nonemptyleaves{t_2}$.
\item $\wfre{\sigma}$: size of $(\termsofsort{\setSort}{\setconst} \cup \{\opemptyset\}) \setminus \vertices$.
\item $\wfri{\sigma}$: size of $\setbuilder{t \in \leaves{\graphconst}}{t \opequal \opemptyset \not\in \setconstclosed, t \not\opequal \opemptyset \not\in \setconstclosed}$.
\item $\wfrb{\sigma}$: number of disequalities $t_1 \not\opequal t_2$ in $\setconst$ such
  that the premise of \ruleSetDiseq\ holds.
\item $\wfrc{\sigma}$: size of $\termsofsort{\setSort}{\setconst} \cup \{\opemptyset\}\cup \vertices$.
\item $\wfrd{\sigma}$: size of $\termsofsort{\elemSort}{\setconst \cup \memconst}$.
\item $\wfrf{\sigma}$: size of $\memconstclosed$ subtracted from $2 \cdot \inparen{\wfrd{\sigma}}^2$.
  As all constraints in $\memconstclosed$ are either $x \opequal y$ or $x \not\opequal y$ with $x$ and $y$ in
  $\termsofsort{\elemSort}{\setconst \cup \memconst}$, the size of $\memconstclosed$ can be at most $2 \cdot \inparen{\wfrd{\sigma}}^2$.
  Thus, $\wfrf{\cdot}$ is well-defined as a map into $\mathbb{N}$.
\item $\wfrg{\sigma}$: size of $\setconstclosed$ subtracted from $2 \cdot \inparen{\wfrc{\sigma}}^2 + 2 \cdot \wfrc{\sigma} \cdot \wfrd{\sigma}$.
  There are at most $2 \cdot \inparen{\wfrc{\sigma}}^2$ constraints of the form $s \opequal t$ or $s \not\opequal t$ in $\setconstclosed$ as
  $s$ and $t$ are in $\termsofsort{\setSort}{\setconst} \cup \{\opemptyset\}\cup \vertices$.
  There are at most $2 \cdot \wfrc{\sigma} \cdot \wfrd{\sigma}$ constraints of the form $x \opin s$ or $x \opnotin s$ in $\setconstclosed$ as
  $x$ and $s$ are in $\termsofsort{\elemSort}{\setconst \cup \memconst}$ and $\termsofsort{\setSort}{\setconst} \cup \{\opemptyset\}\cup \vertices$
  respectively. Thus, $\wfrg{\cdot}$ is well-defined as a map into $\mathbb{N}$.
\item $\wfrh{\sigma}$: size of $\inparen{\termsofsort{\setSort}{\setconst} \cup \{\opemptyset\}\cup \vertices} \setminus \setbuilder{t \in \leaves{\graphconst}}{\arithconst \nottriviallyimply \cardbset \geq \opcard{\bset_\setconst}}$.
\end{itemize}
Then, we define the order $\termreln$ over states as follows:
\begin{itemize}
\item $\sigma \termreln \sigma'$ if $\sigma \neq \unsat$ and $\sigma' = \unsat$.
\item $\sigma \termreln \sigma'$ if $\sigma \neq \unsat$, $\sigma' \neq \unsat$, and
  \[
  \inparen{\wfra{\sigma}, \ldots, \wfrh{\sigma}}
  >^\wfrmax_{\textsf{lex}}
  \inparen{\wfra{\sigma'}, \ldots, \wfrh{\sigma'}}
  \]
  where $\inparen{\mathbb{N}^\wfrmax, >^\wfrmax_{\textsf{lex}}}$ is the $9$-fold lexicographic product of
  ordering over natural numbers $\inparen{\mathbb{N}, >}$.
\item $\sigma \not\termreln \sigma'$ otherwise.
\end{itemize}
The well-foundedness of $\termreln$ over states follows from the well-foundedness of
$\inparen{\mathbb{N}^9, >^\wfrmax_{\textsf{lex}}}$ \cite[Section 2.4]{termrewriting98}. 

Let $r \in \mathcal{R}$ be a rule applicable at state $\sigma$, and let $\sigma'$ be
the state after the application of the rule (if there are multiple
conclusions, denote the state on first branch as $\sigma_1'$, second branch as $\sigma_2'$ and so on).
We note below for each rule $r \in \mathcal{R}$ the relation between
$\wfra{\sigma}$, $\ldots$, $\wfrh{\sigma}$ and
$\wfra{\sigma'}$, $\ldots$, $\wfrh{\sigma'}$
which establishes that $\sigma \termreln \sigma'$.
\begin{itemize}
\item First, we consider
  Rules for intersection (Figure \ref{fig:set-rules1}),
  union (Figure \ref{fig:set-rules1}),
  set difference (Figure \ref{fig:set-rules1})
  and Rule \ruleSingle~for singleton.
  None of these rules introduce equalities of set terms,
  nor do they affect the graph $\graphconst$; thus
  $\wfra{\sigma} \geq \wfra{\sigma'}$.
  The only terms introduced to $\setconst$ are from $\vertices$, thus
  $\wfre{\sigma} = \wfre{\sigma'}$.
  None of these rules update $\graphconst$ or introduce equalities or disequalities of set terms, thus
  $\wfri{\sigma} = \wfri{\sigma'}$.
  None of these rules introduce disequalities between set terms, thus
  $\wfrb{\sigma} \geq \wfrb{\sigma'}$.
  None of these rules introduce set terms not already in $\setconst$ or $\vertices$, thus
  $\wfrc{\sigma} = \wfrc{\sigma'}$.
  None of the rules introduce $\elemSort$ variables not already in $\setconst$ or $\memconst$, thus
  $\wfrd{\sigma} = \wfrd{\sigma'}$.
  None of these rules update $\memconst$, thus
  $\wfrf{\sigma} = \wfrf{\sigma'}$.

  Each of these rules updates $\setconst$.
  Recall that for a rule to be applicable at $\sigma$,
  the resulting state must be different from $\sigma$.
  From the definition of $\closedaddfn$, we can conclude that the size of $\setconstclosed$ has increased.
  As $\wfrc{\sigma} = \wfrc{\sigma'}$ and $\wfrd{\sigma} = \wfrd{\sigma'}$, it follows that
  $\wfrg{\sigma} > \wfrg{\sigma'}$.
\item Next, we consider Rules \ruleSingleMem,
  \ruleSingleNonMem~and \ruleMemArrange.
  None of these rules introduce equalities of set terms, thus
  $\wfra{\sigma} \geq \wfra{\sigma'}$.
  None of these rules introduce set terms to $\setconst$ or $\vertices$, thus
  $\wfre{\sigma} = \wfre{\sigma'}$.
  None of these rules update $\graphconst$ or introduce equalities or disequality of set terms, thus
  $\wfri{\sigma} = \wfri{\sigma'}$.
  None of these rules introduce disequalities of set terms, thus
  $\wfrb{\sigma} \geq \wfrb{\sigma'}$.
  None of these rules introduce set terms to $\setconst$ or $\vertices$, thus
  $\wfrc{\sigma} = \wfrc{\sigma'}$.
  None of the rules introduce $\elemSort$ variables not already in $\setconst$ or $\memconst$, thus
  $\wfrd{\sigma} = \wfrd{\sigma'}$.
  
  Each of these rules updates $\memconst$.
  From the definition of $\closedaddfn$, we can conclude that the size of $\memconstclosed$ has increased.
  As $\wfrd{\sigma} = \wfrd{\sigma'}$,
  we can conclude that $\wfrf{\sigma} > \wfrf{\sigma'}$.
\item Next, we consider Rule \ruleSetDiseq.
  The rule does not introduce any equality of set terms, thus
  $\wfra{\sigma} \geq \wfra{\sigma_i'}$ for $i \in \{1, 2\}$.
  The rule does not introduce set terms to $\setconst$ or $\vertices$, thus
  $\wfre{\sigma} = \wfre{\sigma_i'}$ for $i \in \{1, 2\}$.
  The rule does not update $\graphconst$,
  thus $\wfri{\sigma} \geq \wfri{\sigma_i'}$ for $i \in \{1, 2\}$.
  The premise of the rule
  does not hold after application of the rule on either of the
  branches. It follows that $\wfrb{\sigma} > \wfrb{\sigma_i'}$
  for $i \in \{1,2\}$.
\item Next, we consider Introduce Rules (Figure 6).
  Note that none of these rules introduce equalities of set terms.
  Also note that if $t_1 \in \vertices$,
  $t_2 \in \vertices$ and $t_1 \opequal t_2$ in $\setconst$ then
  $\nonemptyleaves{t_1} = \nonemptyleaves{t_2}$
  (see Proposition \ref{prop:graph}, property \ref{prop:graph:propequal}).
  Thus, $\wfra{\sigma} \geq \wfra{\sigma'}$.

  Each of the rules adds at least one new node to $\graphconst$ which is in
  $\termsofsort{\setSort}{\setconst} \cup \{\opemptyset\}$. At the same time,
  $\setconst$ is unchanged.
  It follows that
  $\wfre{\sigma} > \wfre{\sigma'}$.
  
\item Rules \ruleMergeEqI\ and \ruleMergeEqII.
  Though these rules add equalities of the form $u \opequal \opemptyset$
  to $\setconst$, the equalities are such that $u \in \vertices$, $\opemptyset \in \vertices$ and
  $\nonemptyleaves{u} = \emptyset =\nonemptyleaves{\opemptyset}$.
  It follows that $\wfra{\sigma} \geq \wfra{\sigma'}$.
  
  Now, observe that for
  Rule \ruleMergeEqI\ or Rule \ruleMergeEqII\ to be applicable, there must exist $s \opequal t \in \setconst$
  such that $\nonemptyleaves{s} \neq \nonemptyleaves{t}$.
  After the application of the rule, $\nonemptyleaves{s} = \nonemptyleaves{t}$.
  This shows that $\wfra{\sigma} > \wfra{\sigma'}$.
\item Rule \ruleMergeEqIII.
  For the rule to be applicable, there must exist $s \opequal t \in \setconst$
  such that $\nonemptyleaves{s} \neq \nonemptyleaves{t}$.
  After the application of the rule, $\nonemptyleaves{s} = \nonemptyleaves{t}$.
  Thus, necessarily $\wfra{\sigma} > \wfra{\sigma'}$.
\item Rule \ruleGuessEmpty.
  Note that though this rule may add an equality of the form $t \opequal \opemptyset$ on the first branch,
  using the same reasoning as for Rules \ruleMergeEqI\ and \ruleMergeEqII\ above, we can conclude that
  $\wfra{\sigma} \geq \wfra{\sigma_1'}$.
  On the second branch, as no disequality is added, we get that
  $\wfra{\sigma} \geq \wfra{\sigma_2'}$.
  Only terms introduced to $\setconst$ are from $\vertices$, thus
  $\wfre{\sigma} = \wfre{\sigma_1'}$ for  $i \in \{1, 2\}$.

  In order to apply the rule, we pick a $t \in \leaves{G}$ such that
  $t \opequal \opemptyset \not\in \setconstclosed$ and
  $t \not\opequal \opemptyset \not\in \setconstclosed$.
  On the first branch, $t \opequal \opemptyset \in \setconstclosed$, thus $\wfri{\sigma} > \wfri{\sigma_1'}$.
  On the second branch, $t \not\opequal \opemptyset \in \setconstclosed$, thus $\wfri{\sigma} > \wfri{\sigma_2'}$.
\item Rule \rulePropMinsize.
  The rule does not update $\setconst$, $\memconst$, or $\graphconst$, thus
  $\wfra{\sigma} = \wfra{\sigma'}$,
  $\wfre{\sigma} = \wfre{\sigma'}$,
  $\wfri{\sigma} = \wfri{\sigma'}$,
  $\wfrb{\sigma} = \wfrb{\sigma'}$,
  $\wfrc{\sigma} = \wfrc{\sigma'}$,
  $\wfrd{\sigma} = \wfrd{\sigma'}$,
  $\wfrf{\sigma} = \wfrf{\sigma'}$, and
  $\wfrg{\sigma} = \wfrg{\sigma'}$.
  But, $\wfrh{\sigma} > \wfrh{\sigma'}$.
\item Rules \ruleEqUnsat, \ruleSetUnsat, \ruleEmptyUnsat,
  and \ruleAriContra.
  For each of these rules to be applicable, $\sigma \neq \unsat$.
  On the other hand, $\sigma' = \unsat$ after the application of the rule.
  By definition, $\sigma \termreln \sigma'$. \qedhere
\end{itemize}

\end{proof}

\begin{rem}
  It is easy to extend the termination proof above to include the optional
  rule 
  \ruleGuessLB.  It would involve tracking sizes of
  additional objects%
%
---a strategy similar to the one adopted for Rule \ruleGuessEmpty\ 
  in our proof would suffice.
\end{rem}

\subsection{Completeness}%

%
We prove properties about different subsets of rules,
developing the completeness proof in stages.
%
We start with a proposition about rule set $\mathcal{R}_1$.


\begin{prop}\label{prop:membermodel}
Let $\langle \setconst, \memconst, \arithconst, \graphconst \rangle$
be a derivation tree leaf that is saturated with respect to $\mathcal{R}_1$.
There is a model $\structureS$ of $\settheory$ that satisfies the constraints
$\setconst$ and $\memconst$ and has the following properties.
\begin{enumerate}
\item
For all $x, y \in \varsof(\memconst) \cup \varsof(\setconst)$ of sort $\elemSort$,\\
$x^\structureS = y^\structureS$ if and only if $x \opequal y \in \memconstclosed$.
\item
For all $S \in \varsof(\setconst)$ of sort $\setSort$,
$S^\structureS = \setbuilder{ x^\structureS }{ x \opin S \in \setconstclosed}$.
\item
For all $c_\asetv \in \varsof(\setconst)$ of sort $\cardSort$,
$c_\asetv^\structureS = \card{ S^\structureS }$.
\end{enumerate}
  
\end{prop}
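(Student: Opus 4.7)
The plan is to read off the model $\structureS$ directly from the closures $\memconstclosed$ and $\setconstclosed$, exploiting the fact that saturation under $\mathcal{R}_1$ ensures every membership-propagation rule has fired and no contradiction-detecting rule is applicable at the leaf. First I will fix a countably infinite set $E$ and assign each element-sort variable $x$ in $\varsof(\memconst) \cup \varsof(\setconst)$ a value $x^\structureS \in E$ so that $x^\structureS = y^\structureS$ iff $x \opequal y \in \memconstclosed$; this is possible because the equalities in $\memconstclosed$ form an equivalence relation on these variables, so one fresh representative per class suffices. Then I set $\asetv^\structureS = \{x^\structureS \mid x \opin \asetv \in \setconstclosed\}$ for each set variable $\asetv$ and $c_\asetv^\structureS = |\asetv^\structureS|$ for each cardinality variable, both well-defined and the latter a natural number because $\setconst$, hence $\setconstclosed$, is finite. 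The three listed properties then hold immediately by construction.

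The remaining work is to check that $\structureS$ satisfies every constraint in $\memconst$ and $\setconst$, which I will do by casing on the flat-form constraints of Restriction~\ref{restr}. A disequality $x \not\opequal y \in \memconst$ is satisfied because otherwise $x \opequal y \in \memconstclosed$ together with the definition of $\memconstclosed$ forces $x \not\opequal x \in \memconstclosed$, enabling \ruleEqUnsat. Memberships $x \opin \asetv \in \setconst$ hold by construction, and a would-be violation of $x \opnotin \asetv \in \setconst$ would yield some $y$ with $x \opequal y \in \memconstclosed$ and $y \opin \asetv \in \setconstclosed$; the closure definition would then give $x \opin \asetv \in \setconstclosed$, firing \ruleSetUnsat. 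The cases $\asetv \opequal \opemptyset$, $\asetv \opequal \opsingleton{x}$, and $\asetv \not\opequal \bsetv$ are handled, respectively, by \ruleEmptyUnsat, by the singleton rules, and by the fresh witness produced by saturation of \ruleSetDiseq. Equalities $\asetv \opequal \bsetv$ between set variables follow from the fact that $\setconstclosed$ identifies memberships across the induced equality closure, and cardinality constraints $c_\asetv \opequal \opcard{\asetv}$ hold by the definition of $c_\asetv^\structureS$.

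I expect the main obstacle to be the cases $\asetv \opequal \bsetv \opunion \csetv$, $\asetv \opequal \bsetv \opinter \csetv$, and $\asetv \opequal \bsetv \opsetminus \csetv$, where the semantic identity $\asetv^\structureS = \bsetv^\structureS \opunion \csetv^\structureS$ (respectively $\opinter$, $\opsetminus$) must be extracted from the interplay of several rules. Each inclusion unfolds to a biconditional about membership in $\setconstclosed$: the Down rules of Figure~\ref{fig:set-rules1} supply the forward direction; the Up rules supply the backward direction whenever memberships in both operands are already known; and the Split rules guarantee that on a saturated leaf no undetermined membership remains that could leave a gap between a positive membership in the composite term and the memberships in its operands. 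For the $\opsetminus$ case in particular I will need \ruleSetUnsat\ again to rule out a spurious $x \opin \csetv \in \setconstclosed$ when $x^\structureS \notin \csetv^\structureS$, and \ruleSetDifferenceSplit\ to turn the absence of a positive $\setconstclosed$-membership in $\csetv$ into an explicit $x \opnotin \csetv \in \setconstclosed$. Once these cases are handled, satisfaction of all flat-form set constraints follows uniformly.
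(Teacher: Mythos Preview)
Your proposal is correct and follows essentially the same approach as the paper: define $\structureS$ on element variables via the $\memconstclosed$-classes, on set variables via the $\setconstclosed$-memberships, on cardinality variables via the resulting sizes, and then verify each flat-form constraint using saturation under the membership rules. The only organizational difference is that the paper factors the composite-term reasoning into a separate lemma---it introduces $\Elements{s} = \{x^\structureS \mid x \opin s \in \setconstclosed\}$ and proves $\Elements{s} = s^\structureS$ for every set term $s$ by structural induction before checking the constraints---whereas you inline that argument directly into the verification of $\asetv \opequal \bsetv \opunion \csetv$, $\asetv \opequal \bsetv \opinter \csetv$, and $\asetv \opequal \bsetv \opsetminus \csetv$; the rule applications you cite (Down for one inclusion, Up for the other, Split plus \ruleSetUnsat\ for the residual cases) match the paper's exactly.
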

\begin{proof}
Since the models of $\settheory$ are closed under variable reassignment,
we pick an arbitrary model $\structureS$ of $\settheory$ and show that we can change
its interpretation of the variables of $\setconst \cup \memconst$
to satisfy the properties above.

We start by interpreting 
all variables of $\elemSort$ sort in $\setconst \cup \memconst$ so that,  
for all $x$ and $y$ in $\varsof(\memconst) \cup \varsof(\setconst)$ of $\elemSort$
  sort,
  \[ x^\structureS = y^\structureS \text{ if and only } x \opequal y \in \memconstclosed \ \text{.}\]
  %
  %
  It follows that $\structureS$ satisfies $\memconst$.
  Next, let $\structureS$ interpret each variable $S$ of $\setSort$ sort in $\varsof(\setconst)$ as:
  \[ S^\structureS = \setbuilder{ x^\structureS }{ x \opin S \in \setconstclosed} \]
  and each variable $c_\asetv$ of $\cardSort$ sort in $\varsof(\setconst)$ as:
  \[ c_\asetv^\structureS = \card{ S^\structureS } \ \text{.}\]
  %
  %

  For any set term $\aset$, define 
  \begin{equation}
    \Elements{\aset} = \setbuilder{ x^\structureS }{ x \opin \aset \in \setconstclosed}\ .
    \label{eq:elementsdefn}
  \end{equation}
Let $\allsetterms$ be an arbitrary set of set
terms which includes all set terms in $\setconst$. 
Using the assumption that the given state is saturated,
we show by structural induction on set terms that for any set term $\aset \in \allsetterms$:
  \begin{equation}
    \Elements{\aset} = \aset^\structureS
    \label{eq:elementsaltdefn}
  \end{equation}

  \begin{case}[$\aset$ is a variable]
    The definition of $\Elements{\aset}$ is identical to that of
    $s^\structureS$.
  \end{case}

  \begin{case}[$\aset$ is $\opemptyset$]
    Rule \ruleEmptyUnsat\ would apply to the state
    if there was a constraint of the form $x \opin \opemptyset$ in
    $\setconstclosed$.
    It follows that $\Elements{\opemptyset} = \emptyset$.
  \end{case}

  \begin{case}[$\aset$ is $\opsingleton{x}$]
    As $s^\structureS = \singleton{x^\structureS}$, it is sufficient to show that
    $\Elements{\aset} = \singleton{x^\structureS}$.
    Since rule \ruleSingle\  is not applicable, we can conclude that
    $x \opin \aset \in \setconstclosed$. It follows that
    $\singleton{x^\structureS} \subseteq \Elements{\aset}$.
    The other direction, $\Elements{\aset} \subseteq
    \singleton{x^\structureS}$, follows because of saturation with respect 
    to rule \ruleSingleMem:
    \begin{align*}
      & e \in \Elements{\opsingleton{x}}
      &
      \\
      & e = y^\structureS \text{ for some } y \text{ with } y \opin \opsingleton{x} \in \setconstclosed
      & \text{(definition)}
      \\
      & y \opequal x \in \memconstclosed
      & \text{Rule \ruleSingleMem}
      \\
      & y^\structureS = x^\structureS
      & \text{($\structureS$ satisfies $\memconst$)}
      \\
      & e \in \singleton{x^\structureS}
      & \text{($e = y^\structureS$)}
    \end{align*}
  \end{case}

  \begin{case}[$\aset$ is $\bset \opinter \cset$]
    We need to show $\Elements{\bset \opinter \cset} = \bset^\structureS \cap \cset^\structureS$.
    The proof of the left-to-right inclusion depends on rule \ruleInterDownI:
    \begin{align*}
      & e \in \Elements{\bset \opinter \cset}
      &
      \\
      & e = x^\structureS \text{ for some } x \text{ with } x \opin \bset \opinter \cset \in \setconstclosed
      & \text{(definition)}
      \\
      & x \opin \bset \in \setconstclosed \text{ and } x \opin \cset \in \setconstclosed
      & \text{(Rule \ruleInterDownI)}
      \\
      & x^\structureS \in \Elements{\bset} \text{ and } x^\structureS \in \Elements{\cset}
      & \text{(definition)}
      \\
      & x^\structureS \in \bset^\structureS \text{ and } x^\structureS \in \cset^\structureS
      & \text{(induction)}
      \\
      & e \in \bset^\structureS \cap \cset^\structureS
    \end{align*}    
    For the other direction, $\bset^\structureS \cap \cset^\structureS \subseteq \Elements{\bset \opinter \cset}$, we rely on rule \ruleInterUpI:
    \begin{align*}
      & e \in \bset^\structureS \cap \cset^\structureS
      &
      \\
      & e \in \bset^\structureS \text{ and } e \in \cset^\structureS
      &
      \\
      & e \in \Elements{\bset} \text{ and } e \in \Elements{\cset}
      & \text{(induction)}
      \\
      & x \opin \bset \in \setconstclosed \text{ and } y \opin \cset \in \setconstclosed \text{ with } x^\interpretation = y^\interpretation = e
      & \text{(definition)}
      \\
      & x \opin \bset \in \setconstclosed \text{ and } y \opin \cset \in \setconstclosed \text{ with } x \opequal y \in \memconstclosed
      & \text{(by construction)}
      \\
      & x \opin \bset \in \setconstclosed \text{ and } x \opin \cset \in \setconstclosed
      &
      \\
      & x \opin \bset \opinter \cset \in \setconstclosed
      & \text{($\bset \opinter \cset \in \allsetterms$, Rule \ruleInterUpI)}
      \\
      & e \in \Elements{t \opinter u}
      &
    \end{align*}
  \end{case}

  \begin{case}[$\aset$ is $\bset \opunion \cset$]
    First we show that $\Elements{\bset \opunion \cset}
    \subseteq \bset^\structureS \cap \cset^\structureS$:
    \begin{align*}
      & e \in \Elements{\bset \opunion \cset}
      &
      \\
      & e = x^\structureS \text{ for some } x \text{ with } x \opin \bset \opunion \cset \in \setconstclosed
      & \text{(definition)}
      \\
      & x \opin \bset \in \setconstclosed \text{ or } x \opin \cset \in \setconstclosed
      & \text{(Rule \ruleUnionUpSplit)}
      \\
      & x^\structureS \in \Elements{\bset} \text{ or } x^\structureS \in \Elements{\cset}
      & \text{(definition)}
      \\
      & x^\structureS \in \bset^\structureS \text{ or } x^\structureS \in \cset^\structureS
      & \text{(induction)}
      \\
      & e \in \bset^\structureS \cup \cset^\structureS
      &
    \end{align*}
    Then we show that $\bset^\structureS \cup \cset^\structureS \subseteq \Elements{\bset \opunion \cset}$:
    \begin{align*}
      & e \in \bset^\structureS \cup \cset^\structureS
      &
      \\
      & e \in \bset^\structureS \text{ or } e \in \cset^\structureS
      &
      \\
      & e \in \Elements{\bset} \text{ or } e \in \Elements{\cset}
      & \text{(induction)}
      \\
      & x \opin \bset \in \setconstclosed \text{ or } x \opin \cset \in \setconstclosed \text{ where } x^\interpretation = e
      & \text{(definition)}
      \\
      & x \opin \bset \opunion \cset \in \setconstclosed
      & \text{($\bset \opunion \cset \in \allsetterms$, Rule \ruleUnionUpII)}
      \\
      & e \in \Elements{t \opinter u}
      &
    \end{align*}
  \end{case}
  \begin{case}[$\aset$ is $\bset \opsetminus \cset$]
    First we show that $\Elements{\bset \opsetminus \cset} \subseteq \bset^\structureS \setminus \cset^\structureS$:
     \begin{align*}
      & e \in \Elements{\bset \opsetminus \cset}
      &
      \\
      & e = x^\structureS \text{ for some } x \text{ with } x \opin \bset \opsetminus \cset \in \setconstclosed
      & \text{(definition)}
      \\
      & x \opin \bset \in \setconstclosed \text{ and } x \opnotin \cset \in \setconstclosed
      & \text{(Rule \ruleSetDifferenceDownI)}
      \\
      \intertext{From $x \opnotin \cset \in \setconstclosed$ we can conclude 
      that $x^\structureS \in \Elements{\cset}$.
      In fact, if we assume otherwise, we have that 
      $y \opin \cset \in \setconstclosed$ for some $y$ with $x^\structureS = y^\structureS$.
      But then $x \opequal y \in \memconstclosed$ which implies that 
      $x \opin \cset \in \setconstclosed$.
      This, however, make rules \ruleSetUnsat\ applicable. Then we have:}
      & x^\structureS \in \Elements{\bset} \text{ and } x^\structureS \not\in \Elements{\cset}
      &
      \\
      & x^\structureS \in \bset^\structureS \text{ and } x^\structureS \not\in \cset^\structureS
      & \text{(induction)}
      \\
      & e \in \bset^\structureS \setminus \cset^\structureS
      &
    \end{align*}
    We now show that $\bset^\structureS \setminus \cset^\structureS \subseteq \Elements{\bset \opsetminus \cset}$:
    \begin{align*}
      & e \in \bset^\structureS \setminus \cset^\structureS
      &
      \\
      & e \in \bset^\structureS \text{ and } e \not\in \cset^\structureS
      &
      \\
      & e \in \Elements{\bset} \text{ and } e \not\in \Elements{\cset}
      & \text{(induction)}
      \\
      & x \opin \bset \in \setconstclosed \text{ and } x \opin \cset \not\in \setconstclosed \text{ for some } x \text{ with } x^\interpretation = e
      & \text{(definition)}
    \end{align*}
      We show by contradiction that $x \opnotin \cset \in \setconstclosed$. 
      Assume the otherwise. Since $x \opin \cset \not\in \setconstclosed$ and 
      $\bset \opsetminus \cset \in \allsetterms$, the premise of rule \ruleSetDifferenceSplit~is satisfied. 
      As we had neither $x \opin \cset \in \setconstclosed$ nor $x \opnotin \cset \in \setconstclosed$, we get a contradiction.
    \begin{align*}
      & x \opin \bset \in \setconstclosed \text{ and } x \opnotin \cset \in \setconstclosed
      & \text{(Rule \ruleSetDifferenceSplit)}
      \\
      & x \opin \bset \opsetminus \cset \in \setconstclosed
      & \text{($\bset \opsetminus \cset \in \allsetterms$, Rule \ruleSetDifferenceUpI)}
      \\
      & e \in \Elements{t \opsetminus u}
      &
    \end{align*}
  \end{case}
  
  Having established the property of $\Elements{\cdot}$, showing that each
  constraint in $\setconst$ is satisfied by $\structureS$ is
  straightforward:
  \begin{enumerate}
    \item  
      Let $x \opin \aset \in \setconst$. Then, $x^\structureS \in
      \Elements{\aset}$ by (\ref{eq:elementsdefn})
 and $x^\structureS \in \aset^\structureS$ by (\ref{eq:elementsaltdefn}).
    \item 
      Let $x \opnotin \aset \in \setconst$. We show $x^\structureS
      \not\in \aset^\structureS$ by contradiction.
    \begin{align*}
      & x^\structureS \in \aset^\structureS
      & \text{(assume)}
      \\
      & x^\structureS \in \Elements{\aset}
      & \text{(proved above)}
      \\
      & x^\structureS = y^\structureS \text{ for some } y \text{ with } y \opin \aset \in \setconstclosed
      & \text{(definition)}
      \\
      & x \opequal y \in \memconstclosed
      & \text{($x^{\structureS} = y^\structureS$ iff $x \opequal y \in \memconstclosed$)}
      \\
      & x \opin \aset \in \setconstclosed
      & \text{(definition of $\setconstclosed$)}
      \\
      & \text{Tableau is closed, contradiction.}
      & \text{(Rule \ruleSetUnsat)}
    \end{align*}
    \item 
      Let $\aset \opequal \bset \in \setconst$.
      From the definition of $\setconstclosed$
      it follows that $\Elements{\aset} = \Elements{\bset}$.
      Since $s^\structureS = \Elements{s}$ and $t^\structureS = \Elements{t}$,
      it follows that $s^\structureS = t^\structureS$.
    \item
      Let $s \not\opequal t \in \setconst$. From rule \ruleSetDiseq, it follows that there exists
      $x$ such that either $x \opin \aset \in \setconstclosed$ and $x \opnotin \bset \in \setconstclosed$, or
      $x \opnotin \aset \in \setconstclosed$ and $x \opin \bset \in \setconstclosed$.
      It follows that either $x^\structureS \in \aset^\structureS$ and $x^\structureS \not\in \bset^\structureS$, or
      $x^\structureS \not\in \aset^\structureS$ and $x^\structureS \in \bset^\structureS$.
      In either case, we can conclude that $\aset^\structureS \neq \bset^\structureS$.
    \item Let $c_\asetv \opequal \opcard{\asetv} \in \setconst$. By definition, both $c_\asetv^\structureS = \card{\asetv^\structureS} = \opcard{\asetv}^\structureS$. \qedhere
  \end{enumerate}
  



    



\end{proof}


For the next two results, let $\langle \setconst, \memconst, \arithconst, \graphconst \rangle$ 
be a derivation tree leaf saturated with respect to rules $\mathcal{R}_1 \cup \mathcal{R}_2 \cup \mathcal{R}_3$ 
in a derivation tree.
The first result is about the effects of the rules in $\mathcal{R}_2$.
The second is about the rules in $\mathcal{R}_3$.

\begin{prop} \label{prop:graph}
%
For every $\aset \in \vertices$ the following holds.
\begin{enumerate}
\item \label{prop:graph:propequal}
  If $\aset \opequal \bset \in \setconst$ or $\bset \opequal \aset \in \setconst$
  for some $\bset$, then $\nonemptyleaves{\aset} = \nonemptyleaves{\bset}$.
\item \label{prop:graph:propunion}
  If $s = T \opunion U$, then $\nonemptyleaves{T \opunion U} = \nonemptyleaves{T} \cup \nonemptyleaves{U}$.
\item \label{prop:graph:propinter}
  If $s = T \opinter U$, then $\nonemptyleaves{T \opinter U} = \nonemptyleaves{T} \cap \nonemptyleaves{U}$.
\item \label{prop:graph:propsetminus}
  If $s = T \setminus U$, then $\nonemptyleaves{T \opsetminus U} = \nonemptyleaves{T} \setminus \nonemptyleaves{U}$.
\item \label{prop:graph:propdisjoint}
  For all distinct $\bset, \cset \in \leaves{\aset}$, 
  $\models_\settheory  \bset \opinter \cset \opequal \opemptyset$.

\item \label{prop:graph:propleaves}
  \(
  \setbuilder{\bset \opequal \cset}{\bset \opequal \cset \in \setconstclosed}
  \models_\settheory
  \aset \opequal \bigsqcup_{\bset \in \nonemptyleaves{\aset}} \bset .
  \)%
  \footnote{Technically, $\bigsqcup_{\ldots}$ is ambiguous. However, since $\opunion$ is associative in $\settheory$, bracketing does not matter in this context.}
\end{enumerate}
\end{prop}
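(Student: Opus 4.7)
The plan is to prove the six properties by leveraging two sources of information: the local invariants maintained by the $\add$ and $\merge$ operations, and saturation with respect to the merge rules in $\mathcal{R}_2$. I will handle property 1 first (it is almost immediate), then properties 2--4 (which require an inductive analysis), then properties 5 and 6 (which rest on a disjointness invariant).

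For property~1, suppose $\aset \opequal \bset \in \setconst$ with $\aset, \bset \in \vertices$ but $\nonemptyleaves{\aset} \neq \nonemptyleaves{\bset}$. Then either one of the two sets is a proper subset of the other, making \ruleMergeEqI\ applicable (note that $\opemptyset \in \vertices$ because \ruleIntroEqEmpty\ must have fired before saturation), or the two sets are incomparable under $\subseteq$, making \ruleMergeEqII\ applicable. Either case contradicts saturation.

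For properties 2--4, I would proceed by induction on the sequence of graph-modifying rule applications producing the final graph. The base case (empty graph) is vacuous. At each step I check preservation. When a fresh node of the form $T \opunion U$, $T \opinter U$, or $T \opsetminus U$ is introduced by \ruleIntroEqUnion, \ruleIntroEqInter, or \ruleIntroEqSetDiff, the $\add$ function immediately equips it with the correct Venn-region substructure, from which the required identity on $\nonemptyleaves{\cdot}$ can be read off directly. The delicate case is \ruleMergeEqII: when a leaf $l$ is split into new children $\{l \opinter l' : l' \in L\}$, every ancestor of $l$ has $l$ replaced in its leaf set by exactly the same collection of descendants, so a previously established identity such as $\nonemptyleaves{T \opunion U} = \nonemptyleaves{T} \cup \nonemptyleaves{U}$ remains valid under the new, finer partitioning. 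I would then invoke property~1 (applied before properties 2--4, since its proof is self-contained) to discard leaves known to be empty uniformly across ancestors.

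Properties~5 and~6 then follow from the stronger invariant that, for every non-leaf $v \in \vertices$, the children of $v$ are pairwise disjoint in $\settheory$ and their union equals $v$. This holds after every $\add$ call because the Venn regions of two sets partition their union, and it is preserved by $\merge$: splitting a leaf $l$ into $\{l \opinter l' : l' \in L\}$ (with the $l'$s pairwise disjoint as elements of $\nonemptyleaves{\aset} \setminus \nonemptyleaves{\bset}$ or vice versa) yields parts that are disjoint and together equal $l$. Iterating the invariant downward from $\aset$ to its leaves gives property~5. For property~6, the same iteration shows that $\aset$ is $\settheory$-equal to the disjoint union of $\leaves{\aset}$; dropping the leaves known to be equal to $\opemptyset$ (sound modulo the equalities in $\setconstclosed$) leaves $\bigsqcup_{\bset \in \nonemptyleaves{\aset}} \bset$, as desired.

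The main obstacle is a clean formulation of the preservation invariant for properties 2--4 in the presence of \ruleMergeEqII. The argument relies on Restriction~\ref{restr} and its preservation by every rule, which together ensure that each union, intersection, and set-difference node receives outgoing edges only from its single defining $\add$ call. This lets one track its non-empty leaves recursively from the initial Venn-region substructure, rather than having to worry about further edges being added to such nodes later in the derivation.
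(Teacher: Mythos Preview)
Your approach matches the paper's: property~1 via saturation with respect to the merge rules, properties~2--4 by induction over graph-modifying steps, and properties~5--6 via an invariant established by $\add$ and preserved by $\merge$. Two points deserve tightening.

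First, the passage from $\leavesfn$ to $\nonemptyleavesfn$ in properties~2--4 does not require property~1; what is needed is simply that $\nonemptyleaves{v} = \leaves{v} \setminus E$ for a \emph{fixed} set $E$ independent of $v$, so that identities such as $\leaves{T \opunion U} = \leaves{T} \cup \leaves{U}$ transfer to $\nonemptyleavesfn$ by elementary set algebra. This is exactly how the paper handles it.

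Second, your invariant for properties~5--6 (``the children of $v$ are pairwise disjoint and their union equals $v$'') is not preserved by $\merge$ as an unconditional $\settheory$-validity. When a leaf $l \in \nonemptyleaves{\aset} \setminus \nonemptyleaves{\bset}$ is split into $\{l \opinter l' : l' \in \nonemptyleaves{\bset} \setminus \nonemptyleaves{\aset}\}$, the union of the pieces equals $l$ only \emph{modulo} the equality $\aset \opequal \bset$ together with the inductive instance of property~6 for $\bset$ (one needs $l \opsubset \aset \opequal \bset \opequal \bigsqcup \nonemptyleaves{\bset}$ and disjointness of $l$ from the leaves in $\nonemptyleaves{\aset} \cap \nonemptyleaves{\bset}$). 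The paper's statement of property~6 carries the equalities from $\setconstclosed$ as hypotheses precisely for this reason, and its preservation argument under $\merge$ makes the dependence on $\aset \opequal \bset$ explicit; your invariant should be qualified the same way.
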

\begin{proof}[Proof (Proposition \ref{prop:graph}, property \ref{prop:graph:propequal})]
Let $\aset \opequal \bset \in \setconst$, with $\aset \in \vertices$ or $\bset \in \vertices$.
From rule \ruleIntroEqRight~and rule \ruleIntroEqLeft\ it follows that both $\aset \in \vertices$ and $\bset \in \vertices$.
For each of the Rules \ruleMergeEqI, \ruleMergeEqII, and \ruleMergeEqIII; we show that
after the application of the rule, $\nonemptyleaves{s}$ and $\nonemptyleaves{t}$ are equal.

Consider rule \ruleMergeEqI.
Let $L_\aset$ and $L_\bset$ denote $\nonemptyleaves{\aset}$ and $\nonemptyleaves{\bset}$ respectively
before application of the rule. Let $L'_\aset$ and $L'_\bset$ denote
$\nonemptyleaves{\aset}$ and $\nonemptyleaves{\bset}$ after application of the rule.
For the rule to be applicable $L_\aset \subsetneq L_\bset$.
The rule adds constraints to $\setconst$ so that $L'_\bset = L_\bset \setminus (L_{\bset} \setminus L_\aset)$.
Equivalently, $L'_\bset = L_\bset \cap L_\aset = L_\aset$. Since $L'_\aset = L_\aset$,
we get $L'_\aset = L_\aset = L'_\bset$.

The case for rule \ruleMergeEqII~is analogous to rule \ruleMergeEqI.

Consider rule \ruleMergeEqIII.
Let $L_\aset$ and $L_\bset$ denote $\nonemptyleaves{\aset}$ and $\nonemptyleaves{\bset}$ respectively
before application of the rule. Let $L'_\aset$ and $L'_\bset$ denote
$\nonemptyleaves{\aset}$ and $\nonemptyleaves{\bset}$ after application of the rule.
Let $n \in L'_\aset$. Note that the $\merge$ operation only adds nodes
and vertices. Thus, $n$ is one of the following:
\begin{itemize}
\item $l_1 \opinter l_2$ with $l_1 \in L_\aset$ and $l_2 \in L_\bset$:
  Since $(l_1, l_1 \opinter l_2)$ as well as $(l_2, l_1 \opinter l_2)$ is an edge, it follows that $n \in L'_\bset$.
\item $l_1 \in L_\aset$. Since nodes in $L_\aset \setminus L_\bset$ have an outgoing edge, it must be the case that $l_1 \in L_\aset \cap L_\bset$. It follows that $n \in L'_\bset$.
\end{itemize}
This shows that $L'_\aset \subseteq L'_\bset$. The reasoning for $L'_\bset \subseteq L'_\aset$ is symmetrical.

As $\aset \opequal \bset$, $\aset \in \vertices$, and $\bset \in \vertices$, the premise of at least
once of the rules (\ruleMergeEqI), (\ruleMergeEqII), and (\ruleMergeEqIII) must be satisfied
whenever $\nonemptyleaves{s} \neq \nonemptyleaves{t}$.
As the branch is saturated, $\nonemptyleaves{s} = \nonemptyleaves{t}$ follows.
\end{proof}
%
\begin{proof}[Proof (Proposition \ref{prop:graph}, properties \ref{prop:graph:propunion}, \ref{prop:graph:propinter}, \ref{prop:graph:propsetminus})]

As $\derivtree$ is obtained from a derivation starting with a state
with an empty graph, it is sufficient to show the properties hold for
the empty graph, and that they are preserved each time the graph is
modified by one of the rules.

The properties hold trivially for the empty graph. The interesting
cases are when edges are added to the graph: i) $\add$ of a union,
intersection, or set minus term, and ii) $\merge$ operation.

\reviewer{
 - page 18: line 14 from the bottom:  sounds like there is something missing in
 the sentence "Observe that when we..."
}
\response{
  Added ``the following holds'' to end of sentence.
}
Observe that when we introduce $T \opunion U$,  $T \opinter U$, and  $T \opsetminus U$ to the graph, the following holds:
\begin{itemize}
\item $\leaves{T} = \{T \opsetminus U, T \opinter U\}$,
\item $\leaves{U}= \{T \opinter U, U \opsetminus T\}$,
\item  $\leaves{T \opunion U} = \{ T \opsetminus U,  T \opinter U, U \opsetminus T\}$,
\item  $\leaves{T \opinter U} = \{ T \opinter U \}$,
\item $\leaves{T \opsetminus U} = \{T \opsetminus U\}$, and
\item $\leaves{U \opsetminus T} = \{U \opsetminus T\}$.
\end{itemize}
We conclude that:
\begin{itemize}
\item $\leaves{T \opunion U} = \leaves{T} \cup \leaves{U}$
\item $\leaves{T \opinter U} = \leaves{T} \cap \leaves{U}$
\item $\leaves{T \opsetminus U} = \leaves{T} \setminus \leaves{U}$
\item $\leaves{U \opsetminus T} = \leaves{U} \setminus \leaves{T}$
\end{itemize}
when an introduce rule is applied.
Note that the merge operation only adds edges from existing leaf
nodes, ensuring that the property is maintained by any application
of $\merge$.

$\nonemptyleaves{\cdot}$, as defined in (\ref{eq:nonemptyleaves}), can also be defined as:
\begin{equation}
  \nonemptyleaves{n} = \leaves{n} \setminus E
\end{equation}
where $E = \setbuilder{n' \in \vertices}{n' \opequal \opemptyset \in \setconstclosed}$
does not depend on $n$.
The properties in the proposition about $\nonemptyleaves{\cdot}$
follow from the corresponding property of $\leaves{\cdot}$ just established, and
above formulation of $\nonemptyleaves{\cdot}$.
\end{proof}
\begin{proof}[Proof (Proposition \ref{prop:graph}, properties \ref{prop:graph:propdisjoint},\ref{prop:graph:propleaves})]
The properties holds trivially for the empty graph.

Let $\graphconst$ be the graph constraints. Let $\aset \in \vertices$.
Let $\aset' \opequal \emptyset$ be a new constraint such that $\aset' \in \nonemptyleaves{\aset}$.
Then, this modifies $\nonemptyleaves{\aset}$, and we need to verify the Property \ref{prop:graph:propleaves} still holds.
Note that for any structure in $\settheory$, if $\aset'$ is interpreted as empty set, the interpretation of
$\bigsqcup_{\bset \in \nonemptyleaves{\aset} \setminus \{\aset'\}} \bset$
will be same as $\bigsqcup_{\bset \in \nonemptyleaves{\aset}} \bset$.
Thus, if $\aset' \in \nonemptyleaves{\aset}$ and
\[ \models_\settheory \inparen{\bigwedge_{P \in E} P} \Rightarrow \inparen{\aset \opequal \bigsqcup_{\bset \in \nonemptyleaves{\aset}} \bset} \quad\text{,}\]
then
 \[ \models_\settheory \inparen{\aset' \opequal \opemptyset \wedge \bigwedge_{P \in E} P} \Rightarrow \inparen{\aset \opequal \bigsqcup_{\bset \in \nonemptyleaves{\aset} \setminus \{\aset'\}} \bset} \quad\text{.}\]
It follows if $\aset' \opequal \emptyset$ is added to $\setconstclosed$ by a rule, the property \ref{prop:graph:propleaves} continue to hold.
Also note that an equality is not removed by any rule (if there was such a rule, we would need to check the property continues to hold when the left side of the implication is weakened).

The only other rules which affect the properties are those which modify the graph directly,
i.e. the $\add$ and $\merge$ operations.

We show that if $\graphconst$ satisfies the properties, then so does $\add(\graphconst, \aset)$:
\begin{itemize}
\item $\aset$ is $\opemptyset$, $\asetv$ or $\opsingleton{x}$: trivially, as no edges are added.
\item $\aset$ is $\bsetv \opinter \csetv$:
  Note that because of the assumptions on the normal form, either $\bsetv \opinter \csetv$ already in the graph and $\add$
  operation does not modify the graph, or it will add the nodes $\bsetv$, $\csetv$, $\bsetv \opsetminus \csetv$, $\bsetv \opinter \csetv$,
  and $\csetv \opsetminus \bsetv$ to the graph, and edges between them.
 It is easy to see that the property \ref{prop:graph:propdisjoint} follows from:
  \begin{align*}
    &\models_\settheory \inparen{\inparen{\bsetv \opsetminus \csetv} \opinter \inparen{\bsetv \opinter \csetv}} \opequal \opemptyset \\
    &\models_\settheory \inparen{\inparen{\csetv \opsetminus \bsetv} \opinter \inparen{\bsetv \opinter \csetv}} \opequal \opemptyset
  \end{align*}
  Property \ref{prop:graph:propleaves} follows from:
  \begin{align*}
    &\models_\settheory \bsetv \opequal \inparen{\inparen{\bsetv \opsetminus \csetv} \opunion \inparen{\bsetv \opinter \csetv}} \\
    &\models_\settheory \csetv \opequal \inparen{\inparen{\csetv \opsetminus \bsetv} \opunion \inparen{\bsetv \opinter \csetv}} \\
    &\models_\settheory \inparen{\bsetv \opinter \csetv} \opequal \inparen{\bsetv \opinter \csetv} \\
    &\models_\settheory \inparen{\csetv \opsetminus \bsetv} \opequal \inparen{\csetv \opsetminus \bsetv} \\
    &\models_\settheory \inparen{\bsetv \opsetminus \csetv} \opequal \inparen{\bsetv \opsetminus \csetv}
  \end{align*}
  and reasoning as earlier that any constraint of the form $\aset' \opequal \opemptyset$ does not affect the property.
\item $\aset$ is $\bsetv \opsetminus \csetv$ or $\csetv \opsetminus \bsetv$: reasoning same as for $\bsetv \opinter \csetv$.
\item $\aset$ is $\bsetv \opunion \csetv$. If not already present, $\bsetv$, $\csetv$, $\bsetv \opsetminus \csetv$, $\bsetv \opinter \csetv$ are added to the graph as for $\bsetv \opinter \csetv$. In addition, $\add$ for union also adds $\bsetv \opunion \csetv$, and three edges. The properties follows from the following tautologies in $\settheory$ in addition to those listed in analysis for $\bsetv \opinter \csetv$:
  \begin{align*}
    &\models_\settheory \inparen{\inparen{\bsetv \opsetminus \csetv} \opinter \inparen{\inparen{\csetv \opsetminus \bsetv}}} \opequal \opemptyset \\
    &\models_\settheory \inparen{\bsetv \opunion \csetv} \opequal \inparen{\inparen{\bsetv \opsetminus \csetv} \opunion \inparen{\bsetv \opinter \csetv} \opunion \inparen{\csetv \opsetminus \bsetv}}
  \end{align*}
\end{itemize}

Finally, we show that if $\graphconst$ satisfies the properties,
then so does $\merge(\graphconst, \aset, \bset)$ if
$\aset \in \vertices$,
$\bset \in \vertices$,
$\nonemptyleaves{\aset} \nsubseteq \nonemptyleaves{\bset}$ and
$\nonemptyleaves{\bset} \nsubseteq \nonemptyleaves{\aset}$.

Let $L_\aset$ denote $\nonemptyleaves{\aset}$ in $\graphconst$,
and $L'_\aset$ denote $\nonemptyleaves{\aset}$ in $\merge(\graphconst, \aset', \bset')$
(likewise for $\bset$, $\cset$ etc.).

In order to show property \ref{prop:graph:propdisjoint} holds,
let $\aset' \in \vertices$, $\bset' \in L'_{\aset'}$ and $\cset' \in L'_{\aset'}$.
We need to show:
$\models_\settheory \bset' \opinter \cset' \opequal \opemptyset$.
\begin{itemize}
\item Let $\bset' \in L_{\aset'}$ and $\cset' \in L_{\aset'}$, i.e. both are
  also leaf nodes in $\graphconst$. Then, the property for $\merge(\graphconst, \aset, \bset)$
  follows from that of $\graphconst$.
\item Let $\bset'$ be one of the newly introduced leaf nodes and $\cset' \in L_{\aset'}$ a leaf node in $\graphconst$.
  Without loss of generality, let $\bset'$ be $t_1 \opinter t_2$
  with $t_1 \in L_\aset \setminus L_\bset$ and $t_2 \in L_\bset \setminus L_\aset$.
  For $\bset'$ to be in $L'_{\aset'}$, given the way the edges are added, either
  $t_1 \in L_{\aset'}$ or $t_2 \in L_{\aset'}$. Thus, we know that either
  $\models_\settheory t_1 \opinter \cset' \opequal \opemptyset$
  or
  $\models_\settheory t_2 \opinter \cset' \opequal \opemptyset$.
  In either case, it follows that
  $\models_\settheory \inparen{t_1 \opinter t_2} \opinter \cset' \opequal \opemptyset$, i.e.
  $\models_\settheory \bset' \opinter \cset' \opequal \opemptyset$.
\item The analysis for the case where both are newly introduced leaf nodes is similar.
\end{itemize}

To show property \ref{prop:graph:propleaves} holds,
the main observation is that each node no longer a leaf node, say $\aset' \in L_{\aset} \setminus L'_{\aset}$,
is union of a new set of leaf nodes in $L'_{\aset}$ (assuming the equalities).
\begin{align*}
  \aset'
  & \opequal \aset' \opinter \aset
  & \text{($\aset' \in L_\aset$, $\aset \opequal \bigsqcup_{\aset'' \in L_\aset} \aset''$)}
  \\
  & \opequal \aset' \opinter \bset
  & \text{($\aset \opequal \bset \in E$)}
  \\
  & \opequal \aset' \opinter \inparen{\bigsqcup_{\bset' \in L_\bset} \bset'}
  & \text{($\bset \opequal \bigsqcup_{\bset' \in L_\bset} \bset'$)}
  \\
  & \opequal \bigsqcup_{\bset' \in L_\bset} \aset' \opinter \bset'
  & \text{(distribute)}
  \\
  \intertext{But by property \ref{prop:graph:propdisjoint}, $\aset' \opinter \bset' \opequal \opemptyset$ for $\aset', \bset' \in L_\aset$. Thus,}
  \aset'
  & \opequal \bigsqcup_{\bset' \in L_\bset \setminus L_\aset} \aset' \opinter \bset'
\end{align*}
Note that $\setbuilder{\aset' \opinter \bset'}{\bset'\in L_\bset \setminus L_\aset}$ are precisely the nodes in
$L'_\aset$ to which edges are added from $\aset'$. The proof for a node in $L_{\bset}$ but not in $L'_{\bset}$ is similar.

Since all the new leaf nodes are of the form $\aset' \opinter \bset'$ with $\aset'\in L_\aset \setminus L_\bset$ and
$\bset'\in L_\bset \setminus L_\aset$, it follows that property \ref{prop:graph:propleaves} holds for $\merge(\graphconst, \aset, \bset)$ if
it holds for $\graphconst$ assuming $\aset \opequal \bset \in E$.
%
\end{proof}


\begin{prop} \label{prop:arithmodel}
Let
$\langle \setconst, \memconst, \arithconst, \graphconst \rangle$
be a state such that none of the rules in our calculus are applicable.
Let $\structureS$ be an  interpretation defined in Proposition \ref{prop:membermodel}
satisfying constraints in $\setconst$ and $\memconst$.
To recall, for $x$ and $y$ of $\elemSort$ sort,
\[ x^{\structureS} = y^\structureS  \text{ if and only if } x \opequal y \in \memconstclosed \]
and for $\aset$ of $\setSort$ sort,
\[
  \aset^\structureS = \setbuilder{ x^\structureS }{ x \opin s \in \setconstclosed}\text{.}
\]
Let $\structureA$ be an interpretation satisfying $\arithconst$.
Then, for all $\bset \in \nonemptyleaves{\graphconst}$,
\[
  c_\bset^\structureA \geq \card{\bset^\structureS}\text{.}
\]
\end{prop}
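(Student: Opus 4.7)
The plan is to show that saturation already forces $\arithconst$ to contain a syntactic lower bound on $c_\bset$ matching $\card{\bset^\structureS}$; then $\structureA \models \arithconst$ yields the desired inequality directly. First I would reduce the semantic cardinality to a syntactic count: by the construction of $\structureS$ in Proposition~\ref{prop:membermodel}, $\bset^\structureS = \setbuilder{x^\structureS}{x \opin \bset \in \setconstclosed}$ and $x^\structureS = y^\structureS$ iff $x \opequal y \in \memconstclosed$, so the mapping $x \mapsto x^\structureS$ identifies precisely those variables in the same $\eqls$-equivalence class. Hence $\card{\bset^\structureS} = \card{\bset_\setconst}$, where $\bset_\setconst$ is the set of $\eqls$-equivalence classes of elements known to lie in $\bset$, as defined just before Figure~\ref{fig:card-mem-interact}.

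Setting $n = \card{\bset_\setconst}$, the key claim is $\arithconst \triviallyimply c_\bset \geq n$, which I would prove by contradiction. Suppose $\arithconst \nottriviallyimply c_\bset \geq n$. Since $\bset$ is a leaf of $\graphconst$ (as $\bset \in \nonemptyleaves{\graphconst}$ gives $\bset \in \allleaves$), saturation of \ruleMemArrange forces $x \not\opequal y \in \memconstclosed$ for every pair of variables $x, y$ with $\equivcl{\eqls}{x}, \equivcl{\eqls}{y} \in \bset_\setconst$ and $\equivcl{\eqls}{x} \neq \equivcl{\eqls}{y}$; otherwise the rule would still be applicable. Picking one representative $x_i$ per class in $\bset_\setconst$ then yields $n$ element variables with $x_i \opin \bset \in \setconstclosed$ (by definition of $\bset_\setconst$) and $x_i \not\opequal x_j \in \memconstclosed$ for all $i < j$. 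This makes the premise of \rulePropMinsize\ hold for $\bset$ with parameter $n$, so by saturation $\arithconst \triviallyimply c_\bset \geq n$ must already hold, contradicting the hypothesis. Unpacking $\triviallyimply$ yields $c_\bset \opgeq k \in \arithconst$ for some concrete $k \geq n$, and $\structureA \models \arithconst$ gives $c_\bset^\structureA \geq k \geq n = \card{\bset^\structureS}$.

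The main obstacle is the degenerate case $n = 0$, in which both \ruleMemArrange and \rulePropMinsize have trivially unsatisfiable premises and supply no lower bound on $c_\bset$. There we only need $c_\bset^\structureA \geq 0$, which follows from the interpretation of $\cardSort$ as $\mathbb{N}$ in every $\ariththeory$-model (equivalently, from the clause $c_\bset \opgeq 0$ present in $\grapharithconst$, provided $\structureA$ is taken to satisfy $\arithconst \cup \grapharithconst$, which is consistent by saturation of \ruleAriContra). Apart from this edge case, the proof is essentially bookkeeping once the joint effect of the saturated interaction rules \ruleMemArrange\ and \rulePropMinsize\ has been unpacked.
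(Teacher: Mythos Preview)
Your proof is correct and follows essentially the same approach as the paper: reduce $\card{\bset^\structureS}$ to $\card{\bset_\setconst}$ via the construction of $\structureS$, then argue that saturation with respect to \ruleMemArrange\ and \rulePropMinsize\ forces $\arithconst \triviallyimply c_\bset \geq \card{\bset_\setconst}$, from which the inequality follows since $\structureA \models \arithconst$. Your treatment is in fact more explicit than the paper's, which sketches the second step in two sentences and does not single out the $n=0$ case; your handling of that edge case via the interpretation of $\cardSort$ as $\mathbb{N}$ is the right fix (note that \rulePropMinsize\ with $n=0$ arguably does apply if one allows the empty tuple of $x_i$'s, but your fallback covers either reading).
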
%
\begin{proof}
  Let $\bset \in \nonemptyleaves{\graphconst}$.
  First we show that if
  $\arithconst \Rightarrow \cardbset \geq \card{\bset_\setconst}$, then the proposition follows.
  That is there exists $n \geq \card{\bset_\setconst}$
  such that $\cardbset \opgeq n \in \arithconst$.
  Let $\Elements{\cdot}$ be as in (\ref{eq:elementsdefn}).
  \begin{align*}
    c_\bset^\structureA
    & \geq n^\structureA
    & \text{($\cardbset \opgeq n \in \arithconst$)}
    \\
    & = n
    & \text{(constant symbol)}
    \\
    & \geq \card{\bset_\setconst}
    & \text{(definition)}
    \\
    & = \card{\Elements{\bset}}
    & \text{($x^{\structureS} = y^\structureS$ iff $x \opequal y \in \memconstclosed$)}
    \\
    & = \card{\bset^\structureS}
    & \text{(using (\ref{eq:elementsaltdefn}))}
  \end{align*}
  It remains to show that $\arithconst \Rightarrow \cardbset \geq \card{\bset_\setconst}$.
  Because of rule \ruleMemArrange, either
  $\arithconst \Rightarrow \cardbset \geq \card{\bset_\setconst}$
  or
  Rule \ruleMemArrange\ is applicable until the premise of
  rule \rulePropMinsize\ holds.
  If Rule \rulePropMinsize\ is applicable,
  $\cardbset \opgeq \card{\bset_\setconst}$ must have been added to $\arithconst$.
  In either case, $\arithconst \Rightarrow \cardbset \geq \card{\bset_\setconst}$.
\end{proof}

\noindent
Completeness is a direct consequence of the following result.

  \newcommand{\setsStatementOfCompletenessProposition}{%
    Let $\setconst_0, \memconst_0, \arithconst_0$ be
    set, element and cardinality constraints respectively, satisfying Restriction
    \ref{restr}.
    Let $\derivation$ be a derivation with respect to rules
    $\mathcal{R}_1 \cup \mathcal{R}_2 \cup \mathcal{R}_3$
    from state
    $\langle \setconst_0,$ $\memconst_0,$ $\arithconst_0,$ $(\emptyset, \emptyset) \rangle$.
    If $\derivation$ is finite, and the final derivation tree, say $\derivtree$,
    in $\derivation$ is open and saturated with respect to the rules
    $\mathcal{R}_1 \cup \mathcal{R}_2 \cup \mathcal{R}_3$;
    then there exists an interpretation
    $\structure$ that satisfies $\setconst_0$, $\memconst_0$ and
    $\arithconst_0$.}%

\begin{prop}\label{prop:solution-soundness}
\setsStatementOfCompletenessProposition
\end{prop}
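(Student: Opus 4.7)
The plan is to combine the set interpretation $\structureS$ from Proposition~\ref{prop:membermodel} with an arithmetic interpretation $\structureA$ of $\arithconst \cup \grapharithconst$, then enlarge each non-empty leaf of $\graphconst$ with fresh elements so that its cardinality matches what $\structureA$ prescribes. Because the final tree is open and saturated, rule \ruleAriContra\ does not apply, so $\arithconst \cup \grapharithconst$ is satisfiable in $\ariththeory$; pick any such $\structureA$. Proposition~\ref{prop:membermodel} supplies $\structureS \models \setconst \cup \memconst$ with the explicit interpretations recalled in Proposition~\ref{prop:arithmodel}, and that proposition in turn yields $c_\bset^\structureA \geq \card{\bset^\structureS}$ for every $\bset \in \nonemptyleaves{\graphconst}$, which is exactly the slack that makes the enlargement possible.

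For the construction of $\structure$, I would, for each $\bset \in \nonemptyleaves{\graphconst}$, choose a finite set $F_\bset$ of $c_\bset^\structureA - \card{\bset^\structureS}$ fresh elements of sort $\elemSort$, with the $F_\bset$'s pairwise disjoint and disjoint from every $\bset'^\structureS$; this is possible because $\elemSort$ is interpreted as a countably infinite set and only finitely many witnesses have been committed so far. Then set $\bset^\structure = \bset^\structureS \cup F_\bset$ on these leaves, $\bset^\structure = \emptyset$ on leaves with $\bset \opequal \opemptyset \in \setconstclosed$, and $\aset^\structure = \bigcup_{\bset \in \nonemptyleaves{\aset}} \bset^\structure$ for every other $\aset \in \vertices$, noting that this union is disjoint by Proposition~\ref{prop:graph}.\ref{prop:graph:propdisjoint}. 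For set variables outside $\vertices$ take $\aset^\structure = \aset^\structureS$, keep $\structureS$ on element variables, and take $c^\structure = c^\structureA$ on cardinality variables.

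Verification proceeds constraint by constraint. Element (dis)equalities and positive memberships transfer directly since $\asetv^\structureS \subseteq \asetv^\structure$; the negative membership $x \opnotin \asetv$ is preserved because $x^\structureS \notin \asetv^\structureS$ by Proposition~\ref{prop:membermodel} and every element added to $\asetv^\structure$ is fresh. Set equalities $\aset \opequal \bset$ with both sides in $\vertices$ hold because Proposition~\ref{prop:graph}.\ref{prop:graph:propequal} gives $\nonemptyleaves{\aset} = \nonemptyleaves{\bset}$; flat-form equalities $\asetv \opequal \bsetv \opunion \csetv$, $\asetv \opequal \bsetv \opinter \csetv$ and $\asetv \opequal \bsetv \opsetminus \csetv$ reduce to the corresponding identities on non-empty leaves from Proposition~\ref{prop:graph}.\ref{prop:graph:propunion}--\ref{prop:graph:propsetminus}; equalities whose terms never enter $\vertices$ are already satisfied by $\structureS$. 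Disequalities, singleton, and empty-set constraints follow by the same case analysis as in the proof of Proposition~\ref{prop:membermodel}. For each cardinality link $c_\asetv \opequal \opcard{\asetv} \in \setconst$, rule \ruleIntroEqCard\ has put $\asetv$ into $\vertices$, and disjointness of non-empty leaves together with the corresponding sum constraint in $\grapharithconst$ gives $\card{\asetv^\structure} = \sum_{\bset \in \nonemptyleaves{\asetv}} c_\bset^\structureA = c_\asetv^\structureA = c_\asetv^\structure$. All remaining arithmetic constraints transfer from $\structureA$ by construction.

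The main obstacle is ensuring that inflating the leaves does not silently break any set-side constraint. The dangerous cases are negative memberships, structural equalities involving $\opunion$, $\opinter$, $\opsetminus$, and the balance between actual and asserted cardinalities. The first is controlled by freshness of the $F_\bset$'s together with the injectivity of $\cdot^\structureS$ on element-variable equivalence classes from Proposition~\ref{prop:membermodel}; the second is controlled by the structural invariants of $\graphconst$ established in Proposition~\ref{prop:graph}, which let every node in $\vertices$ be expressed as a disjoint union of its non-empty leaves; the third is secured by the inequality in Proposition~\ref{prop:arithmodel}, without which no consistent choice of $F_\bset$ would exist.
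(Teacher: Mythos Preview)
Your approach is the same as the paper's: combine $\structureS$ on elements with $\structureA$ on cardinalities, pad each non-empty leaf with fresh elements to reach $c_\bset^\structureA$, and verify constraint by constraint.

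One case analysis is incomplete, however. For a flat equality such as $\asetv \opequal \bsetv \opinter \csetv$ it can happen that \emph{some but not all} of $\asetv,\bsetv,\csetv$ lie in $\vertices$: if only $\csetv \in \vertices$ then rule \ruleIntroEqInter\ does not fire (it requires both $\bsetv$ and $\csetv$), so neither your ``all in graph, use Proposition~\ref{prop:graph}'' branch nor your ``none in graph, use $\structureS$'' branch applies. The paper handles this mixed case separately, observing that the fresh padding added to $\csetv$ is disjoint from $\bsetv^\structureS$, whence $\bsetv^\structure \cap \csetv^\structure = \bsetv^\structureS \cap \csetv^\structureS = \asetv^\structureS = \asetv^\structure$; the analogous mixed case for $\opsetminus$ (only $\csetv$ in the graph) is treated the same way. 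You have the right tool in hand---freshness of the $F_\bset$---but you invoke it only for negative memberships, not for these structural equalities.

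Similarly, the singleton constraint $\asetv \opequal \opsingleton{x}$ does not reduce to Proposition~\ref{prop:membermodel}: since $\opsingleton{x}$ (and hence $\asetv$) enters $\vertices$ via \ruleIntroEqSingle\ and \ruleIntroEqLeft, you must rule out that $\asetv$ receives any padding. The paper does this via the constraint $c_{\opsingleton{x}} \opequal 1$ in $\grapharithconst$, which forces $\card{\asetv^\structure} = c_\asetv^\structureA = 1$ and hence $\asetv^\structure = \{x^\structureS\}$.
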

\begin{proof}
Proof outline:
We build a model of the leaf nodes in the graph by modifying as needed the model obtained 
from Proposition~\ref{prop:membermodel}.
We add additional elements to these sets to make the cardinalities match
the model satisfying the cardinality constraints and the constraints induced by the graph.
Propositions \ref{prop:graph} and \ref{prop:arithmodel} ensure that 
it is always possible to do so without violating the set constraints.

As $\derivtree$ is open, there exists a branch that does not
end in the state \textsf{unsat}. Let
$\langle \setconst, \memconst, \arithconst, \graphconst \rangle$
be the final state on such a branch.

Let $\arithconst \cup \grapharithconst$ be the cardinality constraints, and
the cardinality constraints induced by the graph. These constraints fall
in the theory $\ariththeory$. Let $\structureA$ be the
structure satisfying these constraints. Such a structure exists
because rule \ruleAriContra\ would have closed the branch
if the constraints were inconsistent.
From Proposition \ref{prop:membermodel}, we obtain a structure
$\structureS$ satisfying $\setconst$ and $\memconst$. Without loss of
generality, assume that $\elemSort^\structureS$ is infinite.

The $\structure$ we build satisfying $\setconst_0 \cup \memconst_0
\cup \arithconst_0$ will be as follows. It coincides with the
structure $\structureS$ on terms of $\elemSort$ sort. It coincides
with the structure $\structureA$ on terms of $\cardSort$ sort.
In order to define the value of set variables, for each leaf
node $\bset \in \allleaves$ we create the following sets:
\[ B_\bset = \{ e_{\bset,1}, e_{\bset,2} \ldots e_{\bset,\cardbset^\structure-\card{\bset^\structureS}} \} \]
where $e_{\bset,i} \in \elemSort^\structureS$ are distinct from each other
and from any $e$ such that $e=x^\structureS$ for $x$ in $\setconst$ or $\memconst$.
From Proposition \ref{prop:arithmodel}, we know that $\cardbset^\structure \geq \card{\bset^\structureS}$.
Thus, for a leaf node $t$,
\begin{equation}
  \card{\bset^\structureS} + \card{B_t} = \cardbset^\structure\text{.}
  \label{eq:settermcard}
\end{equation}

For a set variable not in the graph, $\asetv \not\in \vertices$,
define $\asetv^\structure = \asetv^\structureS$.
For a set variable in the graph, $\asetv \in \vertices$, define:
\begin{equation}
  \asetv^\structure = \bigcup_{\bset \in \nonemptyleaves{\asetv}} (\bset^\structureS \cup B_\bset)
  \label{eq:setvarmodel}
\end{equation}
From Proposition \ref{prop:graph}, it follows that:
\begin{equation}
  \bigcup_{\bset \in \nonemptyleaves{\asetv}} \bset^\structureS = \asetv^\structureS
  \label{eq:unionOfLeavesIsNode}
\end{equation}
So an equivalent way to define $\asetv^\structure$ is as follows:
\begin{equation}
  \asetv^\structure = \asetv^\structureS \cup \bigcup_{\bset \in \nonemptyleaves{\asetv}} B_\bset
  \label{eq:setvarmodelalt}
\end{equation}

We verify that each constraint in $\setconst_0$ is satisfied:
\begin{enumerate}
\item $S \opequal T$, $S \not\opequal T$.
  
  For $S \opequal T$, we need to show $S^\structure =  T^\structure$.
  If neither $S \in \vertices$ nor $T \in \vertices$, then this
  follows from Proposition \ref{prop:membermodel}.
  If either $S \in \vertices$ or $T \in \vertices$,
  then due to rule \ruleIntroEqRight\ and rule \ruleIntroEqLeft\ 
  both $S \in \vertices$ and $T \in \vertices$.
  From \refgraphprop{equal}, we know that $\nonemptyleaves{S} = \nonemptyleaves{T}$.
  From the definition of $S^\structure$ and $T^\structure$ in (\ref{eq:setvarmodel}),
  it follows that $S^\structure = T^\structure$.

  For $S \not\opequal T$, we need to show $S^\structure \neq T^\structure$.
  Let us write $S^\structure = S^\structureS \cup B_S$, where $B_S = \emptyset$ if $S \not\in \vertices$,
  otherwise let $B_S = \bigcup_{\bset \in \nonemptyleaves{\asetv}} B_\bset$ (from (\ref{eq:setvarmodelalt})).
  Similarly we may write $T^\structure = T^\structureS \cup B_T$.
  From Proposition \ref{prop:membermodel} we know that $S^\structureS \neq T^\structureS$.
  Without loss of generality assume $e \in S^\structureS$ and $e \not\in T^\structureS$.
  By definition, $B_T$ is disjoint from $S^\structureS$, thus $e \not\in B_T$.
  Thus, $e \in S^\structure$ and $e \not\in T^\structure$.  $S^\structure \neq T^\structure$ follows.
\item $S \opequal \opemptyset$.

  We need to show $S^\structure = \opemptyset^\structure = \emptyset$.
  It will follow from rule \ruleIntroEqEmpty\ and rule \ruleIntroEqLeft.
  \begin{align*}
    & \opemptyset \in \vertices \text{ and }  S \in \vertices
    & \text{(Rules \ruleIntroEqEmpty, \ruleIntroEqLeft)}
    \\
    & \nonemptyleaves{S} = \nonemptyleaves{\opemptyset}
    & \text{(\refgraphprop{equal})}
    \\
    & \nonemptyleaves{S} = \emptyset
    & \text{($\nonemptyleaves{\opemptyset} = \emptyset$)}
    \\
    & S^\structure = \emptyset
    & \text{($S \in \vertices$, (\ref{eq:setvarmodel}))}
  \end{align*}
\item $S \opequal \opsingleton{x}$.

  We need to show that $S^\structure = \singleton{x^\structure}$.
  From rule \ruleIntroEqSingle\ we conclude that $\opsingleton{x} \in \vertices$
  Then, from rule \ruleIntroEqLeft, $S \in \vertices$.

  From $\grapharithconst$, we know that:
  \begin{align*}
    c_\asetv^\structure
    & = \sum_{\bset \in \nonemptyleaves{\asetv}} \cardbset^\structure
    & \text{(constraint in $\grapharithconst$ for $c_\asetv$)}\\
    & = \sum_{\bset \in \nonemptyleaves{\opsingleton{x}}} \cardbset^\structure
    & \text{(\refgraphprop{equal})} \\
    & = c_{\opsingleton{x}}^\structure
    & \text{(constraint in $\grapharithconst$ for $c_{\opsingleton{x}}$)}\\
    & = 1
    & \text{(constraint in $\grapharithconst$ for singletons)}
  \end{align*}
  We can conclude that $\card{\asetv^\structure}=1$
  as $\card{\asetv^\structure} = c^\structure_\asetv$
  (for proof of $\card{\asetv^\structure} = c^\structure_\asetv$,
  see reasoning later in this proof for
  $\card{\asetv} \opequal c_\asetv$ -- the same reasoning works for all nodes
  $\asetv \in \vertices$)

  From, \ruleSingle, we know $x^\structureS \in S^\structureS$.
  By Proposition \ref{prop:membermodel}, $x^\structureS \in \asetv^\structureS$.
  As
  \[\asetv^\structure = \asetv^\structureS \cup \bigcup_{\bset \in \nonemptyleaves{\asetv}} B_\bset\]
  and
  $\card{\asetv^\structure} = 1$, we conclude that
  $\asetv^\structure = \singleton{x^\structureS} = \singleton{x^\structure}$.
\item $\asetv \opequal \bsetv \opunion \csetv$.
  We need to show $\asetv^\structure =  \bsetv^\structure \cup \csetv^\structure$.

  Let $\asetv \not\in \vertices$, $\bsetv \not\in \vertices$, and $\csetv \not\in \vertices$.
  Then,
  \begin{align*}
    \asetv^\structure
    & = \asetv^\structureS
    & \text{($\asetv \not\in \vertices$)}
    \\
    & = \bsetv^\structureS \cup \csetv^\structureS
    & \text{(Proposition \ref{prop:membermodel})}
    \\
    & = \bsetv^\structure \cup \csetv^\structure
    & \text{($\bsetv \not\in \vertices$, $\csetv \not\in \vertices$)}
  \end{align*}

  Otherwise, let $\asetv \in \vertices$, or $\bsetv \in \vertices$, or $\csetv \in \vertices$.
  Then, from Rules \ruleIntroEqRight, \ruleIntroEqLeft, \ruleIntroEqUnion\ and
  definition of $\add$, we know $\asetv$, $\bsetv$, \emph{and} $\csetv$ in $\vertices$.
  Then,
  \begin{align*}
    \asetv^\structure
    & = \bigcup_{\bset \in \nonemptyleaves{\asetv}} (\bset^\structureS \cup B_\bset)
    & \text{($\asetv \in \vertices$)}
    \\
    & = \bigcup_{\bset \in \nonemptyleaves{\bsetv \opunion \csetv}} (\bset^\structureS \cup B_\bset)
    & \text{(Proposition \ref{prop:graph})}
    \\
    & = \inparen{\bigcup_{\bset \in \nonemptyleaves{\bsetv}} (\bset^\structureS \cup B_\bset)} \cup
    \inparen{\bigcup_{\bset \in \nonemptyleaves{\csetv}} (\bset^\structureS \cup B_\bset)}
    & \text{(Proposition \ref{prop:graph})}
    \\
    & = \bsetv^\structure \cup \csetv^\structure
    & \text{($\bsetv \in \vertices$, $\csetv \in \vertices$)}
  \end{align*}
\item $\asetv \opequal \bsetv \opinter \csetv$.
  We need to show $\asetv^\structure =  \bsetv^\structure \cap \csetv^\structure$.

  Let $\asetv \not\in \vertices$, $\bsetv \not\in \vertices$, and $\csetv \not\in \vertices$.
  Then,
  \begin{align*}
    \asetv^\structure
    & = \asetv^\structureS
    & \text{($\asetv \not\in \vertices$)}
    \\
    & = \bsetv^\structureS \cap \csetv^\structureS
    & \text{(Proposition \ref{prop:membermodel})}
    \\
    & = \bsetv^\structure \cap \csetv^\structure
    & \text{($\bsetv \not\in \vertices$, $\csetv \not\in \vertices$)}
  \end{align*}

  Let $\asetv \not\in \vertices$ and $\bsetv \not\in \vertices$, but $\csetv \in \vertices$.
  Then,
  \begin{align*}
    \bsetv^\structure \cap \csetv^\structure
    & = \bsetv^\structureS \cap \csetv^\structure
    & \text{($\bsetv \not\in \vertices$)}
    \\
    & = \bsetv^\structureS \cap \inparen{\csetv^\structureS \cup \bigcup_{\bset \in \nonemptyleaves{\csetv}} B_\bset}
    & \text{($\csetv \in \vertices$)}
    \\
    & = \bsetv^\structureS \cap \csetv^\structureS
    & \text{($\bsetv^\structureS \cap B_\bset = \emptyset$)}
    \\
    & = \asetv^\structureS
    & \text{(Proposition \ref{prop:membermodel})}
    \\
    & = \asetv^\structure
    & \text{($\asetv \not\in \vertices$)}
  \end{align*}
  If $\asetv \not\in \vertices$ and $\csetv \not\in \vertices$, but $\bsetv \in \vertices$; the reasoning is same as above.

  Otherwise, either $\asetv \in \vertices$ or both $\bsetv \in \vertices$ and $\csetv \in \vertices$.
  Then, from Rules \ruleIntroEqRight, \ruleIntroEqLeft, \ruleIntroEqInter\ and
  definition of $\add$, we know $\asetv$, $\bsetv$, \emph{and} $\csetv$ in $\vertices$.
  Then,
  \begin{align*}
    \bsetv^\structure \cap \csetv^\structure
    & =  \inparen{\bsetv^\structureS \cup \bigcup_{\bset \in \nonemptyleaves{\bsetv}} B_\bset}
    \cap \inparen{\csetv^\structureS \cup \bigcup_{\bset \in \nonemptyleaves{\csetv}} B_\bset}
    & \text{($\bsetv$, $\csetv$ in $\vertices$)}
    \\
    \intertext{As each $B_\bset$ is disjoint from all other sets, the above expression simplifies to:}
    & = \inparen{\bsetv^\structureS \cap \csetv^\structureS}
    \cup \bigcup_{\bset \in \nonemptyleaves{\bsetv} \cap \nonemptyleaves{\csetv}} B_\bset
    \\
    & = \asetv^\structureS
    \cup \bigcup_{\bset \in \nonemptyleaves{\asetv}} B_\bset
    & \text{(Propositions \ref{prop:membermodel} and \ref{prop:graph})}
    \\
    & = \asetv^\structure
    & \text{($\asetv \in \vertices$)}
  \end{align*}
\item $\asetv \opequal \bsetv \opsetminus \csetv$.
  We need to show $\asetv^\structure =  \bsetv^\structure \setminus \csetv^\structure$.

  Let $\asetv \not\in \vertices$, $\bsetv \not\in \vertices$, and $\csetv \not\in \vertices$.
  Then,
  \begin{align*}
    \asetv^\structure
    & = \asetv^\structureS
    & \text{($\asetv \not\in \vertices$)}
    \\
    & = \bsetv^\structureS \setminus \csetv^\structureS
    & \text{(Proposition \ref{prop:membermodel})}
    \\
    & = \bsetv^\structure \setminus \csetv^\structure
    & \text{($\bsetv \not\in \vertices$, $\csetv \not\in \vertices$)}
  \end{align*}

  Let $\asetv \not\in \vertices$ and $\bsetv \not\in \vertices$, but $\csetv \in \vertices$.
  Then,
  \begin{align*}
    \bsetv^\structure \setminus \csetv^\structure
    & = \bsetv^\structureS \setminus \csetv^\structure
    & \text{($\bsetv \not\in \vertices$)}
    \\
    & = \bsetv^\structureS \setminus \inparen{\csetv^\structureS \cup \bigcup_{\bset \in \nonemptyleaves{\csetv}} B_\bset}
    & \text{($\csetv \in \vertices$)}
    \\
    & = \bsetv^\structureS \setminus \csetv^\structureS
    & \text{($\bsetv^\structureS \setminus B_\bset = \bsetv^\structureS$)}
    \\
    & = \asetv^\structureS
    & \text{(Proposition \ref{prop:membermodel})}
    \\
    & = \asetv^\structure
    & \text{($\asetv \not\in \vertices$)}
  \end{align*}
  Note that in contrast to intersection, if $\asetv \not\in \vertices$, $\bsetv \in \vertices$, and $\csetv \not\in \vertices$,
  the above analysis does not apply. We do need to introduce and reason about the equality in the graph.

  Let $\asetv \in \vertices$ or $\bsetv \in \vertices$.
  From Rules \ruleIntroEqRight, \ruleIntroEqLeft, \ruleIntroEqSetDiff~and
  definition of $\add$
  we know $\asetv$, $\bsetv$, \emph{and} $\csetv$ in $\vertices$.
  Then,
  \begin{align*}
    \bsetv^\structure \setminus \csetv^\structure
    & =  \inparen{\bsetv^\structureS \cup \bigcup_{\bset \in \nonemptyleaves{\bsetv}} B_\bset}
    \setminus \inparen{\csetv^\structureS \cup \bigcup_{\bset \in \nonemptyleaves{\csetv}} B_\bset}
    & \text{($\bsetv$, $\csetv$ in $\vertices$)}
    \\
    \intertext{As each $B_\bset$ is disjoint from all other sets, the above expression simplifies to:}
    & = \inparen{\bsetv^\structureS \setminus \csetv^\structureS}
    \cup \bigcup_{\bset \in \nonemptyleaves{\bsetv} \setminus \nonemptyleaves{\csetv}} B_\bset
    \\
    & = \asetv^\structureS
    \cup \bigcup_{\bset \in \nonemptyleaves{\asetv}} B_\bset
    & \text{(Propositions \ref{prop:membermodel} and \ref{prop:graph})}
    \\
    & = \asetv^\structure
    & \text{($\asetv \not\in \vertices$)}
  \end{align*}

\item $x \opin S$, $x \opnotin S$.
  
  Note that irrespective of whether $\asetv \in \vertices$ or $\asetv \not\in \vertices$,
  $\asetv^\structureS \subseteq \asetv^\structure$. Thus, from Proposition \ref{prop:membermodel},
  $x^\structure \in \asetv^\structure$ if $x \opin \asetv$ is a constraint.

  It remains to show that if $x \opnotin S$ is a constraint then $x^\structure \not\in S^\structure$.
  If $\asetv \not\in \vertices$, then again $x^\structure \not\in \asetv^\structure$
  follows from Proposition \ref{prop:membermodel}.
  If $\asetv \in \vertices$, then observe that $\asetv^\structure$ is
  $\asetv^\structureS \cup \bigcup_{\bset \in \nonemptyleaves{\csetv}} B_\bset$.
  We already know $x^\structure \not\in \asetv^\structureS$.
  It remains to show that
  $x^\structure \not\in \bigcup_{\bset \in \nonemptyleaves{\csetv}} B_\bset$.
  This follows from the definition of $B_\bset$.
\item $c_S \opequal \opcard{S}$.

  From Proposition \ref{prop:graph}, we know that for $\bset, \cset$ in $\nonemptyleaves{\asetv}$:
  \[ \models_\settheory  \bset \opinter \cset \opequal \opemptyset \]
  and also,
  \[
  \models_\settheory \inparen{\bigwedge_{\bset \in E} \bset \opequal \opemptyset}
  \Rightarrow
  \inparen{\asetv \opequal \bigsqcup_{\bset \in \nonemptyleaves{\asetv}} \bset}
  \]
  where $E = \setbuilder{\bset \in \vertices}{\bset \opequal \opemptyset \in \setconstclosed}$.

  In $\structure$, as for each $\bset \in E$, $\bset^\structure = \emptyset$, it follows that:
  \[ \asetv^\structure = \bigcup_{\bset \in \nonemptyleaves{\asetv}} \bset^\structure\text{\quad.} \]
  Also, for $\bset, \cset$ in $\nonemptyleaves{\asetv}$:
  \[ \bset^\structure \cap \cset^\structure = \emptyset \text{\quad.}\]

  In other words, $\asetv^\structure$ is a disjoint union of $\bset^\structure$ where $\bset \in \nonemptyleaves{\asetv}$.
  It follows that,
  \[ \card{\asetv^\structure} = \sum_{\bset \in \nonemptyleaves{\asetv}} \card{\bset^\structure}\]
  For a leaf node $\bset \in \nonemptyleaves{\asetv}$, from (\ref{eq:settermcard}) we know that $\card{\bset^\structure} = \card{\bset^\structureS} + \card{B_\bset} = c_\bset^\structure$.
  We may thus conclude,
  \[  \card{\asetv^\structure} = \sum_{\bset \in \nonemptyleaves{\asetv}} c_\bset^\structure \]
  From the constraint on cardinality for $\asetv$ induced by the graph, i.e the constraint on $c_\asetv$ in $\grapharithconst$,
  we know that $c_\asetv^\structure  = \sum_{\bset \in \nonemptyleaves{\asetv}} c_\bset^\structure$.
  The result follows:
  \[ \card{\asetv^\structure} = c_\asetv^\structure \qedhere \]
\end{enumerate}

\end{proof}




\begin{prop}[Completeness]
\label{prop:completeness}
Under any fair derivation strategy,
every derivation of a set $\mathcal C$ of $\settheory$-unsatisfiable constraints 
extends to a refutation. 
\end{prop}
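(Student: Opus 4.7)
The plan is to derive completeness as a direct consequence of the termination result (Proposition~\ref{prop:finite-derivations}) and the model-construction result (Proposition~\ref{prop:solution-soundness}), with fairness ruling out unsaturated open leaves. Fix a fair derivation of an $\settheory$-unsatisfiable set $\mathcal{C}$. Since \ruleGuessLB is optional and Proposition~\ref{prop:solution-soundness} requires only saturation with respect to $\mathcal{R}_1 \cup \mathcal{R}_2 \cup \mathcal{R}_3$, I would first restrict attention to the sub-calculus that omits $\mathcal{R}_4$; fair derivations using only these rules are finite by Proposition~\ref{prop:finite-derivations}.

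Next, I would argue that under a fair strategy every leaf $\sigma$ of the final derivation tree $T$ is either $\unsat$ or saturated with respect to $\mathcal{R}_1 \cup \mathcal{R}_2 \cup \mathcal{R}_3$: otherwise some applicable rule would have been indefinitely postponed along a finite branch, contradicting fairness together with termination. Suppose for contradiction that $T$ is not closed. Then $T$ has an open branch whose leaf is saturated, and Proposition~\ref{prop:solution-soundness} yields an interpretation $\structure$ satisfying $\setconst_0 \cup \memconst_0 \cup \arithconst_0 = \mathcal{C}$, contradicting the $\settheory$-unsatisfiability of $\mathcal{C}$. Hence every branch of $T$ ends in $\unsat$, so $T$ is closed and the derivation is a refutation.

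The main obstacle is accommodating fair strategies that do make use of \ruleGuessLB, since Proposition~\ref{prop:finite-derivations} establishes termination only for the sub-calculus without it. As sketched in the Remark following that proposition, one augments the well-founded measure $(f_1,\dots,f_9)$ with an additional component that strictly decreases on each application of \ruleGuessLB---for instance, the number of leaves $t \in \leaves{\graphconst}$ for which neither $c_t \opgeq \card{t_\setconst}$ nor $c_t \opless \card{t_\setconst}$ already occurs in $\arithconst$, together with the usual arithmetic components---while being preserved by all other rules. With termination thereby extended to the full calculus, the argument of the previous paragraph applies verbatim, giving completeness for any fair strategy. A secondary, routine matter is verifying that the restriction on root states (Restriction~\ref{restr}) is preserved along every branch, which we have already tacitly relied on in applying Proposition~\ref{prop:solution-soundness}.
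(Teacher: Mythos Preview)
Your proposal is correct and follows essentially the same approach as the paper: argue contrapositively, use termination (Proposition~\ref{prop:finite-derivations}) together with fairness to obtain a saturated open branch, and then invoke Proposition~\ref{prop:solution-soundness} to produce a model of~$\mathcal{C}$, contradicting unsatisfiability. The paper's proof is considerably terser---it simply cites the two propositions and leaves the handling of fairness and of \ruleGuessLB implicit (relying on the Remark after Proposition~\ref{prop:finite-derivations})---whereas you spell out these points explicitly, but the substance is the same.
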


\begin{proof}
Contrapositively, suppose that $\mathcal C$ has a derivation $\derivation$  
that cannot be extended to a refutation.
By Proposition~\ref{prop:finite-derivations}, $\derivation$ must be extensible to one
that ends with a tree with a saturated branch.
By Proposition~\ref{prop:solution-soundness}, $\mathcal C$ is satisfiable in $\settheory$.
\end{proof}

\subsection{Soundness}

We start by showing that every rule preserves constraint satisfiability. 

\begin{lem} \label{lem:soundness}
For every rule of the calculus, the premise state is satisfied 
by a model $\structure_p$ of $\settheory$
iff
one of its conclusion configurations is satisfied 
by a model $\structure_c$ of $\settheory$
where $\structure_p$ and $\structure_c$ agree on the variables shared by the two states.
\end{lem}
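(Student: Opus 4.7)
The plan is to proceed by case analysis on the derivation rules, establishing the biconditional for each. The rules fall into four groups: (i) single-conclusion rules that extend only $\setconst$, $\memconst$, or $\arithconst$ (the propagation rules in Figures~\ref{fig:set-rules1},~\ref{fig:set-rules2}, and \rulePropMinsize); (ii) branching rules (\ruleUnionUpSplit, \ruleInterUpSplit, \ruleSetDifferenceSplit, \ruleSetDiseq, \ruleGuessEmpty, \ruleMemArrange, \ruleGuessLB); (iii) graph-modifying rules (Figures~\ref{fig:introduce} and~\ref{fig:merge}); and (iv) the four unsat-producing rules.

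For the forward direction (premise satisfiable implies some conclusion satisfiable), group (i) is dispatched by showing that the added literal is semantically entailed by the premise under the semantics of $\opunion$, $\opinter$, $\opsetminus$, $\opsingleton{\cdot}$, and $\opcard{\cdot}$; here one also uses that the $\closedaddfn$ operator respects equality closure modulo $\memconstclosed$, $\setconstclosed$. For group (ii), I will argue that the disjunction of the branch conclusions is a semantic validity: this is an instance of the law of excluded middle on a membership, equality, or cardinality atom for the split, guess, and arrangement rules, while for \ruleSetDiseq\ it uses the existence of a witnessing element together with the closure of $\settheory$-models under variable reassignment. For group (iii), the rules leave $\setconst$, $\memconst$, $\arithconst$ unchanged, so the premise model already satisfies the constraint components; newly introduced graph vertices are interpreted as the value of the corresponding set-theoretic expression in the model, preserving satisfaction. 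For group (iv), I instead need to show the premise is unsatisfiable: \ruleEqUnsat, \ruleSetUnsat, \ruleEmptyUnsat\ are immediate from the definitions of $\memconstclosed$ and $\setconstclosed$, whereas \ruleAriContra\ invokes Proposition~\ref{prop:graph} together with the singleton and empty-set interpretations to show that every model of the premise validates $\grapharithconst$, so $\arithconst \cup \grapharithconst \ent[\ariththeory] \bot$ forces the premise to be unsatisfiable.

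For the backward direction (a conclusion satisfiable implies the premise satisfiable), groups (i)--(iii) all follow from the observation that each conclusion is obtained by adding constraints and/or graph structure to the premise; hence every conclusion-model is already a premise-model on the shared vocabulary. Where the rule introduced a fresh variable (the witness $y$ of \ruleSetDiseq, or the intersection-of-leaves vertices produced by $\merge$ in \ruleMergeEqII), variable reassignment lets us project the model onto the shared variables while preserving satisfaction of the premise. Group (iv) is vacuously true since \unsat\ has no satisfying interpretation.

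The main obstacle will be \ruleMergeEqII\ and \ruleAriContra. For \ruleMergeEqII, I must verify that the $\merge$ operation preserves the graph invariant that every non-leaf is the disjoint union of its children, so that any model of the premise (in which $\aset \opequal \bset$ holds) automatically validates the new vertices and edges; this is exactly the content of Proposition~\ref{prop:graph} and the reasoning already carried out in its proof. For \ruleAriContra, the key is that the cardinality constraints collected in $\grapharithconst$ are genuine semantic consequences of the graph --- obtained from the disjointness of leaves, the leaves-union identity, and the fixed cardinalities of singletons and $\opemptyset$ --- again furnished by Proposition~\ref{prop:graph}. Once these two observations are in place, the remaining rules are routine verifications from the semantics of $\settheory$ and the definitions of $\memconstclosed$, $\setconstclosed$, and $\closedaddfn$.
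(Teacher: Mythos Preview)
Your overall approach---case analysis grouped by rule type, invoking the graph invariants of Proposition~\ref{prop:graph} for the tricky cases---matches the paper's sketch. However, there is a genuine gap in your grouping: \ruleMergeEqI\ is not a pure graph-modifying rule. Its conclusion is
\[
  \setconst := \setbuilder{\aset' \opequal \opemptyset}{\aset' \in \nonemptyleaves{\dset} \setminus \nonemptyleaves{\cset}} \cup \setconst,
\]
so it \emph{adds} new set equalities $\aset' \opequal \opemptyset$ to $\setconst$ and leaves $\graphconst$ untouched. Your group-(iii) argument (``the rules leave $\setconst, \memconst, \arithconst$ unchanged, so the premise model already satisfies the constraint components'') therefore does not cover it, and the rule is also not of the form handled by your group-(i) argument, since the added literals are not entailed by the semantics of the set operators alone.

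To repair this you must show that every premise model actually validates each new equality $\aset' \opequal \opemptyset$. This requires Proposition~\ref{prop:graph}, specifically the pairwise disjointness of leaves (property~\ref{prop:graph:propdisjoint}) and the decomposition of a node as the union of its non-empty leaves (property~\ref{prop:graph:propleaves}): in any model of the premise one has $\cset = \bigsqcup_{l \in \nonemptyleaves{\cset}} l$ and $\dset = \bigsqcup_{l \in \nonemptyleaves{\dset}} l$ with the leaves of $\dset$ pairwise disjoint; since the premise contains $\aset \opequal \bset$ (hence $\cset$ and $\dset$ coincide in the model) and $\nonemptyleaves{\cset} \subsetneq \nonemptyleaves{\dset}$, the disjoint union over $\nonemptyleaves{\dset} \setminus \nonemptyleaves{\cset}$ must be empty, forcing each individual $\aset'$ in that difference to be empty. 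The paper singles out exactly this rule for separate treatment and points to the leaf-disjointness property for precisely this reason; once you move \ruleMergeEqI\ out of group~(iii) and handle it alongside \ruleAriContra\ via Proposition~\ref{prop:graph}, the rest of your plan goes through.
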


\begin{proof}[Sketch]
Soundness of the rules in Figure~\ref{fig:set-rules1} and Figure~\ref{fig:set-rules2}
follows trivially from the semantics of set operators and the definition of $\setconstclosed$.
%
Soundness of \ruleMergeEqI\ 
follows from properties of the graph (see Proposition \ref{prop:graph}, in particular the property
that leaf terms are disjoint).
The rules in Figure~\ref{fig:introduce} and rule \ruleMergeEqII\ 
do not
modify the constraints, but we need them to establish properties of the graph.
Soundness of the induced graph constraints in \ruleAriContra\ follows from
Proposition \ref{prop:graph} (in particular properties
\ref{prop:graph:propdisjoint} and \ref{prop:graph:propleaves}).
Soundness of \rulePropMinsize\ follows from the semantics of cardinality.
Soundness of \ruleGuessEmpty, \ruleMemArrange\ and \ruleGuessLB\ 
is trivial.
\end{proof}

\begin{prop}[Soundness]
\label{prop:soundness}
Every set of $\settheory$-constraints that has a refutation is $\settheory$-unsatisfiable.
\end{prop}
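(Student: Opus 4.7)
The plan is to derive soundness as a direct consequence of the per-rule soundness result in Lemma~\ref{lem:soundness}, by contrapositive and induction on the refutation.

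Suppose $\mathcal C$ has a refutation, i.e., a finite derivation whose final tree $\derivtree$ is closed (every branch ends in \unsat). I aim to show $\mathcal C$ is $\settheory$-unsatisfiable. Contrapositively, assume there is a model $\structure_0$ of $\settheory$ satisfying $\mathcal C = \setconst_0 \cup \memconst_0 \cup \arithconst_0$, so $\structure_0$ satisfies the root state $\langle \setconst_0, \memconst_0, \arithconst_0, (\emptyset,\emptyset)\rangle$ (the empty graph imposes no constraint). I will then construct, by induction on the depth of $\derivtree$, an infinite descending path of satisfiable states, contradicting closure.

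The induction step is where Lemma~\ref{lem:soundness} does all the work. Given a non-leaf state $\sigma$ on the constructed path that is satisfied by some model $\structure$ of $\settheory$, the rule applied at $\sigma$ has (possibly several) conclusion states. By the lemma, at least one conclusion $\sigma'$ of that rule is satisfied by a model $\structure'$ that agrees with $\structure$ on the shared variables (the extra variables introduced by rules such as \ruleSetDiseq{} or \ruleMergeEqII{} are handled by the ``one of its conclusion configurations is satisfied'' direction). Extend the path by choosing such a $\sigma'$. Since $\derivtree$ is finite, this path must terminate at a leaf, and every state on the path, including the leaf, is satisfiable in $\settheory$.

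The obstruction is that the leaf, being on a branch of a refutation, is the state $\unsat$, which is not satisfiable by any interpretation. This contradiction shows that no model of $\settheory$ could satisfy the root state, i.e., $\mathcal C$ is $\settheory$-unsatisfiable. The main subtlety, which is already absorbed into Lemma~\ref{lem:soundness}, is the handling of the fresh variables (the element $y$ in \ruleSetDiseq{}, and the fresh intersection nodes in \ruleMergeEqII{}); the statement of the lemma is precisely phrased so that we can pick a model for the conclusion that extends the model of the premise on those fresh symbols, which is what makes the inductive step go through without additional bookkeeping.
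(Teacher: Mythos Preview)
Your proposal is correct and takes essentially the same approach as the paper: both reduce soundness to Lemma~\ref{lem:soundness} via an induction over the derivation tree, with the paper phrasing it as a bottom-up structural induction (every closed subtree has an unsatisfiable root) and you phrasing it as the dual top-down path argument (a satisfiable root yields a satisfiable leaf). One small wording slip: you write ``an infinite descending path'' but immediately note the tree is finite; also, \ruleMergeEqII{} introduces new \emph{terms} (intersection nodes) rather than fresh variables, though this does not affect the argument.
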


\begin{proof}[Sketch]
Given Lemma~\ref{lem:soundness},
one can show by structural induction on derivation trees 
that the root of any closed derivation tree is $\settheory$-unsatisfiable. 
The claim then follows from the fact that every refutation of a set $\mathcal C$ of 
$\settheory$-constraints starts with a state $\settheory$-equisatisfiable with $\mathcal C$.
\end{proof}



\section{Evaluation}
We have implemented a decision procedure based on the calculus above
in the SMT solver \cvc~\cite{CVC4-CAV-11}. We describe a high-level, 
non-deterministic version of it here, followed by an experimental evaluation
on benchmarks from program analysis.

\subsection{Derivation strategy}
\reviewerbis{
Regarding the strategy: it's interesting that branching is preferred to merging...

Also, I was wondering if in (2), the application of propagation rules for $R_1$,
you could not have the special instance of SET DISEQUALITY for disequalities of
the form ($t\not\sim\emptyset$), which is not really a branching rule, given that
one branch is immediately going to be closed. Such instances are presumably
going to be numerous, given the GUESS EMPTY SET rule.
}
The decision procedure can be thought of as a specific strategy for
applying the rules given in Section~\ref{sec:sets:calculus},
divided into the sets $\mathcal{R}_1$, \ldots, $\mathcal{R}_4$
introduced in Section~\ref{sec:sets:correct}.

Our derivation strategy can be summarized as follows. We start
the derivation from the initial state $\langle \setconst_0,
\memconst_0, \arithconst_0, \graphconst_0 \rangle$ with
$\graphconst_0$ the empty graph, as described in
Section~\ref{sec:sets:calculus}, and apply the steps listed below, in
the given order.
The steps are described as rules being applied to a \emph{current}
branch of the derivation tree being constructed. 
Initially, the current branch is the only branch in the tree. On
application of a rule with more than one conclusion, we select one of
the branches (say, the left branch) as the current branch.
\begin{enumerate}
\item If a rule that derives \textsf{unsat} is applicable to the
  current branch, we apply one and close the branch. We then pick
  another open branch as the current branch and repeat Step 1. If no
  open branch exists, we stop and output \textsf{unsat}.
\item If a \emph{propagation} rule (those with one conclusion) in
  $\mathcal{R}_1$ is applicable, apply one and go to Step 1.
\item If a \emph{split} rule (those with more than one conclusion) in
  $\mathcal{R}_1$ is applicable, apply one and go to Step 1.
\item If \ruleGuessEmpty~rule is applicable, apply it and go to Step
  1.
\item If an introduce or merge rule in $\mathcal{R}_2$ is applicable,
  apply it and go to Step 1.
\item If any of the remaining rules are applicable, apply one and go to
  Step 1.
\item
  At this point, the current branch is saturated. Stop and output \textsf{sat}.
\end{enumerate}
Note that 
if there are no constraints involving the cardinality operator, 
then steps 1 to 3 above are sufficient for completeness.

\begin{table}[t]
\caption{Performance of our calculus on benchmarks derived from verification of programs}
\label{table:sets:card}
\center
\begin{tabular}{|l|l|r|r|r|}
\hline
{\bf file} & {\bf output} & {\bf time (s.)} & {\bf \# vertices} & {\bf \# leaves} \\
\hline
\hline
cade07-vc1.smt2           & unsat           & 0.00   & 3          & 3 \\
cade07-vc2a.smt2          & unsat           & 0.00   & 6          & 3 \\
cade07-vc2b.smt2          & sat             & 0.01   & 15         & 5 \\
cade07-vc2.smt2           & unsat           & 0.01   & 6          & 3 \\
cade07-vc3a.smt2          & unsat           & 0.00   & 6          & 0 \\
cade07-vc3b.smt2          & sat             & 0.02   & 15         & 6 \\
cade07-vc3.smt2           & unsat           & 0.01   & 6          & 0 \\
cade07-vc4b.smt2          & sat             & 0.16   & 44         & 12 \\
cade07-vc4.smt2           & unsat           & 0.17   & 51         & 16 \\
cade07-vc5b.smt2          & sat             & 0.39   & 63         & 21 \\
cade07-vc5.smt2           & unsat           & 0.38   & 77         & 25 \\
cade07-vc6a.smt2          & unsat           & 0.02   & 32         & 12 \\
cade07-vc6b.smt2          & sat             & 0.04   & 32         & 12 \\
cade07-vc6c.smt2          & sat             & 0.06   & 32         & 12 \\
cade07-vc6.smt2           & unsat           & 0.32   & 36         & 16 \\
\hline
\hline
cvc4-card.scala-10.smt2   & 2 sat/2 unsat   & 0.10   & 48         & 19 \\
cvc4-card.scala-12.smt2   & 1 sat/3 unsat   & 0.03   & 0          & 0 \\
cvc4-card.scala-14.smt2   & 2 sat/2 unsat   & 0.09   & 25         & 11 \\
cvc4-card.scala-15.smt2   & 1 sat/3 unsat   & 0.01   & 0          & 0 \\
cvc4-card.scala-16.smt2   & 2 sat/4 unsat   & 0.26   & 39         & 18 \\
cvc4-card.scala-17.smt2   & 1 sat/3 unsat   & 0.02   & 19         & 8 \\
cvc4-card.scala-18.smt2   & 2 sat/2 unsat   & 0.10   & 39         & 20 \\
cvc4-card.scala-21.smt2   & 2 sat/2 unsat   & 1.69   & 134        & 35 \\
cvc4-card.scala-6.smt2    & 1 sat/4 unsat   & 0.02   & 8          & 5 \\
cvc4-card.scala-8.smt2    & 1 sat/3 unsat   & 0.06   & 21         & 12 \\
\hline
\end{tabular}

\end{table}


\subsection{Experimental evaluation}
We evaluated our procedure on benchmarks obtained from a software verification 
applications. The experiments were run on a machine with 3.40GHz Intel i7
CPU with a memory limit of 3 GB and timeout of 300 seconds. We used a
development version of \cvc for this evaluation.\footnote{%
\url{https://github.com/kbansal/CVC4/tree/37f6117}}
Benchmarks are available on the \cvc website.\footnote{\url{http://cvc4.cs.stanford.edu/papers/LMCS-2018/}}

The first set of benchmarks consists of single query benchmarks obtained from
verifying programs manipulating pointer-based data structures. These
were generated by the Jahob system, and have been used to evaluate earlier work on
decision procedures for finite sets and cardinality~\cite{KNR06,KR07,SSK11}.
The results from running \cvc on these benchmarks are provided in the top half of Table~\ref{table:sets:card}. 
The output reported by \cvc is in the second column. The third column shows the solving time. 
The fourth and fifth columns
give the maximum number of vertices (\# V) and leaves\footnote{The \# L
  statistic is updated only when explicitly computed, so the numbers
  are approximate. For the same reason, \# L is 0 on
  certain benchmarks even though \# V is not.
  This is because \cvc was able to report unsat before the need for computing 
  the set of leaves arose.}  (\# L)
in the graph at any point during the run of the algorithm. 
Keeping the number of leaves low is important to avoid a blowup from the \ruleMergeEqII~rule.

Although we have not rerun the systems described in~\cite{KNR06,KR07,SSK11},
we report here the experimental results as stated in the respective
papers.\footnote{One reason we were unable to do a more thorough
  comparison with previous work is that those implementations are no longer being maintained.}
Since the experiments were run on different machines the
comparison is only indicative, but it does suggest that our solver has
comparable performance.

In \cite{KR07}, the procedure from \cite{KNR06} is reported to solve 12 of the 15 benchmarks with a timeout of 100 seconds,
while the novel procedure in \cite{KR07} is reported to solve 11 of the 15 benchmarks with the same timeout.
The best-performing previous procedure (\cite{SSK11}) can solve all 15 benchmarks in under a second.\footnote{
%
%
\cite{SSK11} includes a second set of benchmarks, but we were unable
to evaluate our procedure on these, as they were only made available in a non-standard
format and were missing crucial datatype declarations.}
As another point of comparison, we tested the procedure from~\cite{SSK11} on a benchmark of the
type mentioned in Section 1.1: a single
constraint of the form $x \opin A_1 \opunion \ldots \opunion A_{21}$.
As expected, the solver failed (it ran out of memory after 85 seconds). In
contrast, \cvc solves this problem instantaneously. 

Finally, another important difference compared to earlier work is that our
implementation is completely integrated in an actively developed and maintained
solver, \cvc.

To highlight the usefulness of an implementation in a full-featured SMT solver,
we did a second evaluation on a set of incremental (i.e., multiple-query) benchmarks 
obtained from the Leon verification system~\cite{BKKS13}. 
These contain a mix of membership and cardinality constraints combined with
constraints over the theories of datatypes and bitvectors.
The results of this evaluation are shown in the bottom half of Table~\ref{table:sets:card}.  The output column reports the number of sat and unsat queries in each benchmark.
\cvc successfully solves all of the queries in these benchmarks in under one second.  To the best
of our knowledge, no other SMT solver can handle this combination of theories.

\section{Conclusion}
We presented a new decision procedure for deciding
finite sets with cardinality constraints and proved its correctness.
A novel feature of the procedure is that it can reason directly and efficiently
about both membership constraints and cardinality constraints.
We have implemented the procedure in the SMT solver \cvc, and
demonstrated the feasibility as well as some advantages of our approach.
We hope this work will enable the use of sets and cardinality constraints in many new
applications that rely on SMT solvers.
We also expect to use it to drive the development of a standard
theory of sets under the SMT-LIB initiative~\cite{BarFT-SMTLIB}.

We expect to pursue several directions of future work.  We will investigate
relaxing Restriction 3.1 by doing more reasoning modulo equality.  We will also
experiment with different strategies to attempt to find the most efficient
ones.  We will also look into efficient means of combining sets with other
theories and investigate extensions to relations and relational operators.


\section*{Acknowledgement}
   The authors wish to acknowledge fruitful discussions with Viktor
   Kuncak and Etienne Kneuss and for providing the Leon benchmarks. We
   thank Philippe Suter for his help running the algorithm
   from~\cite{SSK11}.


\bibliographystyle{alpha}
\bibliography{biblio}

\newcommand{\etalchar}[1]{$^{#1}$}
\begin{thebibliography}{BCD{\etalchar{+}}11}

\bibitem[AA05]{abrial2005b}
Jean-Raymond Abrial and Jean-Raymond Abrial.
\newblock {\em The B-book: assigning programs to meanings}.
\newblock Cambridge University Press, 2005.

\bibitem[AGP16]{DBLP:conf/cade/AlbertiGP16}
Francesco Alberti, Silvio Ghilardi, and Elena Pagani.
\newblock Counting constraints in flat array fragments.
\newblock In {\em Automated Reasoning - 8th International Joint Conference,
  {IJCAR} 2016, Coimbra, Portugal, June 27 - July 2, 2016, Proceedings}, pages
  65--81, 2016.

\bibitem[ASM80]{AbrialSM80}
Jean{-}Raymond Abrial, Stephen~A. Schuman, and Bertrand Meyer.
\newblock Specification language.
\newblock In {\em On the Construction of Programs}, pages 343--410. Cambridge
  University Press, 1980.

\bibitem[Ban16]{B16}
Kshitij Bansal.
\newblock {\em Decision Procedures for Finite Sets with Cardinality and Local
  Theory Extensions}.
\newblock PhD thesis, New York University, January 2016.

\bibitem[BCD{\etalchar{+}}11]{CVC4-CAV-11}
Clark Barrett, Christopher Conway, Morgan Deters, Liana Hadarean, Dejan
  Jovanovic, Tim King, Andrew Reynolds, and Cesare Tinelli.
\newblock {CVC4}.
\newblock In {\em 23rd International Conference on Computer Aided Verification
  (CAV'11)}, volume 6806 of {\em Lecture Notes in Computer Science}, pages
  171--177. Springer, 2011.

\bibitem[BFT]{BarFT-SMTLIB}
Clark Barrett, Pascal Fontaine, and Cesare Tinelli.
\newblock {The Satisfiability Modulo Theories Library (SMT-LIB)}.
\newblock \url{http://www.SMT-LIB.org}.

\bibitem[BKKS13]{BKKS13}
R\'egis~William Blanc, Etienne Kneuss, Viktor Kuncak, and Philippe Suter.
\newblock An overview of the {Leon} verification system: Verification by
  translation to recursive functions.
\newblock In {\em Scala Workshop}, 2013.

\bibitem[BN98]{termrewriting98}
Franz Baader and Tobias Nipkow.
\newblock {\em Term Rewriting and All That}.
\newblock Cambridge University Press, 1998.

\bibitem[BRBT16]{BRBT16}
Kshitij Bansal, Andrew Reynolds, Clark Barrett, and Cesare Tinelli.
\newblock A new decision procedure for finite sets and cardinality constraints
  in {SMT}.
\newblock In Nicola Olivetti and Ashish Tiwari, editors, {\em Proceedings of
  the $8^{th}$ International Joint Conference on Automated Reasoning (IJCAR
  '16)}, volume 9706 of {\em Lecture Notes in Computer Science}, pages 82--98.
  Springer International Publishing, June 2016.
\newblock Coimbra, Portugal.

\bibitem[BS17]{DBLP:conf/cade/BenderS17}
Markus Bender and Viorica Sofronie{-}Stokkermans.
\newblock Decision procedures for theories of sets with measures.
\newblock In {\em Automated Deduction - {CADE} 26 - 26th International
  Conference on Automated Deduction, Gothenburg, Sweden, August 6-11, 2017,
  Proceedings}, pages 166--184, 2017.

\bibitem[BSST09]{BarSST-09}
Clark Barrett, Roberto Sebastiani, Sanjit Seshia, and Cesare Tinelli.
\newblock Satisfiability modulo theories.
\newblock In Armin Biere, Marijn J.~H. Heule, Hans van Maaren, and Toby Walsh,
  editors, {\em Handbook of Satisfiability}, volume 185, chapter~26, pages
  825--885. IOS Press, February 2009.

\bibitem[CFR15]{DBLP:conf/cade/ChocronFR15}
Paula Chocron, Pascal Fontaine, and Christophe Ringeissen.
\newblock A polite non-disjoint combination method: Theories with bridging
  functions revisited.
\newblock In {\em Automated Deduction - {CADE-25} - 25th International
  Conference on Automated Deduction, Berlin, Germany, August 1-7, 2015,
  Proceedings}, pages 419--433, 2015.

\bibitem[COP01]{cantone01}
Domenico Cantone, Eugenio Omodeo, and Alberto Policriti.
\newblock {\em Set Theory for Computing. From Decision Procedures to Logic
  Programming with Sets}.
\newblock Monographs in Computer Science. Springer, 2001.

\bibitem[CR16]{DBLP:conf/cav/CristiaR16}
Maximiliano Cristi{\'{a}} and Gianfranco Rossi.
\newblock A decision procedure for sets, binary relations and partial
  functions.
\newblock In {\em Computer Aided Verification - 28th International Conference,
  {CAV} 2016, Toronto, ON, Canada, July 17-23, 2016, Proceedings, Part {I}},
  pages 179--198, 2016.

\bibitem[CZ98]{cantone1998}
Domenico Cantone and Calogero~G Zarba.
\newblock A new fast tableau-based decision procedure for an unquantified
  fragment of set theory.
\newblock In {\em Int. Workshop on First-Order Theorem Proving (FTP’98)},
  1998.

\bibitem[DMB09]{de2009generalized}
Leonardo De~Moura and Nikolaj Bj{\o}rner.
\newblock Generalized, efficient array decision procedures.
\newblock In {\em Formal Methods in Computer-Aided Design (FMCAD 2009)}, pages
  45--52. IEEE, 2009.

\bibitem[Jac12]{jackson2012software}
Daniel Jackson.
\newblock {\em Software Abstractions: logic, language, and analysis}.
\newblock MIT press, 2012.

\bibitem[JB10]{JB10-LPAR}
Dejan Jovanovi\'{c} and Clark Barrett.
\newblock Polite theories revisited.
\newblock In {\em Proceedings of the $17^{th}$ International Conference on
  Logic for Programming, Artificial Intelligence, and Reasoning (LPAR '10)},
  volume 6397 of {\em LNCS}, pages 402--416. Springer, October 2010.

\bibitem[KNR06]{KNR06}
Viktor Kuncak, HuuHai Nguyen, and Martin Rinard.
\newblock Deciding {B}oolean algebra with {P}resburger arithmetic.
\newblock {\em Journal of Automated Reasoning}, 36(3):213--239, 2006.

\bibitem[KR07]{KR07}
Viktor Kuncak and Martin Rinard.
\newblock Towards efficient satisfiability checking for {B}oolean {A}lgebra
  with {P}resburger {A}rithmetic.
\newblock In {\em Conference on Automated Deduction (CADE-21)}, volume 4603 of
  {\em Lecture Notes in Computer Science}. Springer, 2007.

\bibitem[KRW09]{kroning2009proposal}
Daniel Kr{\"o}ning, Philipp R{\"u}mmer, and Georg Weissenbacher.
\newblock A proposal for a theory of finite sets, lists, and maps for the
  {SMT-LIB} standard.
\newblock In {\em Proceedings of the 7$^th$ International Workshop on
  Satisfiability Modulo Theories (SMT '09)}, August 2009.

\bibitem[NOT06]{NieOT-JACM-06}
Robert Nieuwenhuis, Albert Oliveras, and Cesare Tinelli.
\newblock {Solving SAT and SAT Modulo Theories: from an Abstract
  Davis-Putnam-Logemann-Loveland Procedure to DPLL(T)}.
\newblock {\em Journal of the ACM}, 53(6):937--977, November 2006.

\bibitem[PK08]{DBLP:conf/vmcai/PiskacK08}
Ruzica Piskac and Viktor Kuncak.
\newblock Decision procedures for multisets with cardinality constraints.
\newblock In {\em Verification, Model Checking, and Abstract Interpretation,
  9th International Conference, {VMCAI} 2008, San Francisco, USA, January 7-9,
  2008, Proceedings}, pages 218--232, 2008.

\bibitem[SDSD86]{Schwartzetal86}
Jacob~T. Schwartz, Robert B.~K. Dewar, Edmond Schonberg, and Ed~Dubinsky.
\newblock {\em Programming with Sets; an Introduction to SETL}.
\newblock Springer-Verlag New York, Inc., New York, NY, USA, 1986.

\bibitem[SS09]{Sofronie-Stokkermans2009}
Viorica Sofronie-Stokkermans.
\newblock Locality results for certain extensions of theories with bridging
  functions.
\newblock In {\em Proceedings of the 22Nd International Conference on Automated
  Deduction}, CADE-22, pages 67--83, Berlin, Heidelberg, 2009. Springer-Verlag.

\bibitem[SSK11]{SSK11}
Philippe Suter, Robin Steiger, and Viktor Kuncak.
\newblock Sets with cardinality constraints in {S}atisfiability {M}odulo
  {T}heories.
\newblock In {\em Verification, Model Checking, and Abstract Interpretation
  (VMCAI)}, 2011.

\bibitem[vGBR16]{DBLP:conf/pldi/GleissenthallBR16}
Klaus von Gleissenthall, Nikolaj Bj{\o}rner, and Andrey Rybalchenko.
\newblock Cardinalities and universal quantifiers for verifying parameterized
  systems.
\newblock In {\em Proceedings of the 37th {ACM} {SIGPLAN} Conference on
  Programming Language Design and Implementation, {PLDI} 2016, Santa Barbara,
  CA, USA, June 13-17, 2016}, pages 599--613, 2016.

\bibitem[Zar02]{Zarba02frocos}
Calogero~G. Zarba.
\newblock Combining sets with integers.
\newblock In {\em Frontiers of Combining Systems, 4th International Workshop,
  FroCoS 2002}, pages 103--116, 2002.

\bibitem[Zar05]{Zarba2005}
Calogero~G. Zarba.
\newblock Combining sets with cardinals.
\newblock {\em Journal of Automated Reasoning}, 34(1):1--29, Jan 2005.

\end{thebibliography}
  





\end{document}